\newtheorem{theorem}{Theorem}
\newtheorem{proposition}[theorem]{Proposition}
\newtheorem{definition}[theorem]{Definition}
\def\btheta{\boldsymbol{\theta}}
\def\s{\boldsymbol{s}}
\def\b{\boldsymbol{b}}
\def\q{\boldsymbol{q}}
\def\r{\boldsymbol{r}}
\def\d{\boldsymbol{d}}
\def\x{\boldsymbol{x}}
\def\e{\boldsymbol{e}}
\def\v{\boldsymbol{v}}
\def\y{\boldsymbol{y}}
\def\z{\boldsymbol{z}}
\def\s{\boldsymbol{s}}
\def\t{\boldsymbol{t}}
\def\u{\boldsymbol{u}}
\def\v{\boldsymbol{v}}
\def\A{\mathbf{A}}
\def\W{\mathbf{W}}
\def\P{\mathbf{P}}
\def\B{\mathbf{B}}
\def\H{\mathbf{H}}
\def\M{\mathbf{M}}
\def\Q{\mathbf{Q}}
\def\I{\mathbf{I}}
\def\Y{\mathbf{Y}}
\def\Z{\mathbf{Z}}
\def\D{\mathbf{D}}
\def\V{\mathbf{V}}
\def\F{\mathbf{F}}
\def\K{\mathbf{K}}
\def\S{\mathbf{S}}
\def\U{\mathbf{U}}
\def\N{\mathcal{N}}
\def\C{\mathbf{C}}
\def\R{\mathcal{R}}
\def\Feat{\mathcal{F}}
\def\BLambda{\boldsymbol{\Lambda}}
\def\Bzeta{\boldsymbol{\zeta}}
\def\prox{\mathrm{prox}}
\def\argmin{\mathop{\mathrm{argmin}}}
\def\bxi{\boldsymbol{\xi}}
\renewcommand{\Re}{\mathbb{R}}
\newcommand{\inner}[2]{\left\langle#1,#2\right\rangle}
\newcommand{\norm}[1]{\left\lVert#1\right\rVert}
\newcolumntype{Y}{>{\centering\arraybackslash}X}
\begin{document}
\markboth{Submitted to IEEE Transactions on Image Processing}{}

\title{Plug-and-Play Regularization using Linear Solvers}

\author{Pravin Nair,~\IEEEmembership{Student~Member,~IEEE} and Kunal N. Chaudhury,~\IEEEmembership{Senior~Member,~IEEE}

\thanks{The work of K.~N.~Chaudhury was supported by  grant CRG/2020/000527  from SERB, Government of India.}
\thanks{P.~Nair and K.~N.~Chaudhury are with the Department of Electrical Engineering, Indian Institute of Science, Bengaluru 560012, India. Email: pravinn@iisc.ac.in, kunal@iisc.ac.in.}
}

\maketitle

\begin{abstract}
There has been tremendous research on the design of image regularizers over the years, from simple Tikhonov and Laplacian to sophisticated sparsity and CNN-based regularizers. Coupled with a model-based loss function, these are typically used for image reconstruction within an optimization framework. The technical challenge is to develop a regularizer that can accurately model realistic images and be optimized efficiently along with the loss function. Motivated by the recent plug-and-play paradigm for image regularization, we construct a quadratic regularizer whose reconstruction capability is competitive with state-of-the-art regularizers. The novelty of the regularizer is that, unlike classical  regularizers, the quadratic objective function is derived from the observed data. Since the regularizer is quadratic, we can reduce the optimization to solving a linear system for applications such as superresolution, deblurring, inpainting, etc. In particular, we show that using iterative Krylov solvers, we can converge to the solution in few iterations, where each iteration requires an application of the forward operator and a linear denoiser. The surprising finding is that we can get close to deep learning methods in terms of reconstruction quality. To the best of our knowledge, the possibility of achieving near state-of-the-art performance using a linear solver is novel.
\end{abstract}
 
\begin{IEEEkeywords}
image reconstruction, regularization, Gaussian denoiser, plug-and-play method, Krylov solver.
\end{IEEEkeywords}

\section{Introduction}

Several image reconstruction problems such as  deblurring, superresolution, inpainting, compressed sensing, demosaicking, etc. are modeled as linear inverse problems where we wish to recover an unknown image $\btheta \in \Re^n$ from noisy linear measurements $\y \in \Re^m$ of the form
\begin{equation}
\label{model}
\y = \F \btheta + \bxi,             
\end{equation}
where $\F \in \Re^{m \times n}$ is the forward  transform (imaging model) and $\bxi \in \Re^m$ is white Gaussian noise \cite{ribes2008linear,scherzer2009variational}. Typically, the reconstruction is performed by solving the optimization problem
\begin{equation}
\label{opt}
\min_{\x \in \Re^n} \ f(\x) +  g(\x),
\end{equation}
where
\begin{equation}
\label{loss}
f(\x)=\frac{1}{2} \| \F\x - \y\|^2
\end{equation}
is the quadratic loss and $g: \Re^n \to \Re \cup \{\infty\}$ is a regularizer \cite{ribes2008linear}. Needless to say, the reconstruction depends crucially on the choice of the regularizer. Starting with simple Tikhonov and Laplacian regularizers \cite{chan2005image,aubert2006mathematical}, we have progressed from wavelet and total-variation \cite{chambolle1998nonlinear,rudin1992nonlinear} to dictionary and CNN regularizers \cite{scherzer2009variational,krishnan2009fast,zoran2011learning,dong2012nonlocally}. More recently, it has been shown that powerful denoisers such as  NLM \cite{buades2005non}, BM3D \cite{dabov2007image} and DnCNN \cite{zhang2017beyond} can be used for regularization purpose. Two prominent techniques in this regard are Regularization-by-Denoising (RED) \cite{romano2017little,mataev2019deepred}  and Plug-and-Play (PnP)  \cite{sreehari2016plug,CWE2017}. 

\subsection{PnP regularization}

The present work is motivated by PnP, where we start with a proximal algorithm for optimizing \eqref{opt} such as ISTA or ADMM and  
replace the proximal operator associated with $g$ (which effectively acts as denoiser) by a more powerful denoiser, albeit in an ad-hoc fashion. 
 For example, in ISTA \cite{beck2009fast}, starting with an initialization $\x_0 \in \Re^n$, the updates are performed as
\begin{figure}
\centering
\subfloat[Observation.]{\includegraphics[width=0.48\linewidth]{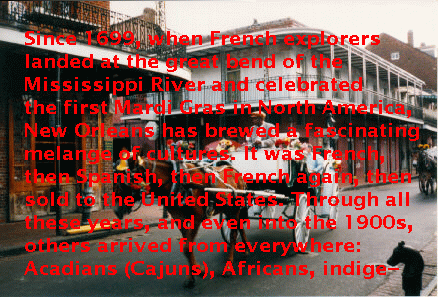}} \hspace{0.1mm}
\subfloat[Output.]{\includegraphics[width=0.48\linewidth]{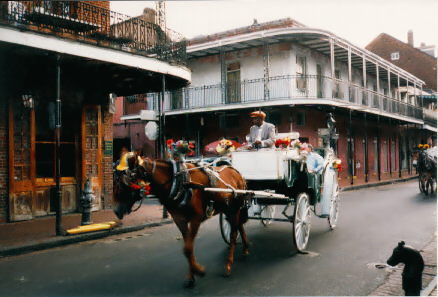}}
\caption{Text removal by Algo.~\ref{propalgo}, which requires solving just one linear system.}

\label{inpaintreal}
\end{figure}

\begin{equation}
\label{ista}
\x_{k+1} = \mathrm{prox}_{g} \big(\x_k- \rho^{-1} \nabla \! f(\x_k)\big), 
\end{equation}
where $\mathrm{prox}_{g}$ is the proximal operator of $g$ and $\rho > 0$ is the step size. In PnP regularization, instead of going through $g$ and its proximal operator, $\mathrm{prox}_{g}$ in \eqref{ista} is directly replaced by a denoiser, i.e., the update is performed as
\begin{equation}
\label{pnpista}
\x_{k+1} = \mathfrak{D} \big(\x_k- \rho^{-1} \nabla \!f(\x_k)\big),
\end{equation}
where $\mathfrak{D}: \Re^n \to \Re^n$ is a denoising operator. The same idea can be applied to other iterative algorithms such as FISTA, Douglas Rachford Splitting, ADMM, Approximate Massage Passing and Chambolle-Pock \cite{sreehari2016plug,gavaskar2021plug,nair2021fixed,zhang2017learning, Ryu2019_PnP_trained_conv,borgerding2017amp,Sun2019_PnP_SGD,sun2021scalable,ono2017primal}. For example, in PnP-ADMM \cite{sreehari2016plug}, starting with $\v_0, \z_0 \in \Re^n$, the updates are performed as
\begin{align}
\label{pnpadmm}
& \x_{k+1} = \underset{\x}{\mathrm{argmin}} \ f(\x) + \frac{1}{2 \rho}\| \x - (\v_k - \z_k) \|^2, \\
& \v_{k+1} = \mathfrak{D} (\x_{k+1} + \z_k), \nonumber \\ 
& \z_{k+1} = \z_k + (\x_{k+1} - \v_{k+1}), \nonumber
\end{align}
where $\rho > 0$ is a penalty parameter. PnP has been shown to yield impressive results for many imaging problems \cite{sreehari2016plug,zhang2017learning,CWE2017} and for graph signal processing \cite{yazaki2019interpolation,gavaskar2022regularization,nagahama2021graph}.

Since PnP algorithms demonstrate impressive regularization capabilities, convergence analysis of PnP has garnered wide interest. For example, iterate convergence has been established for linear inverse problems in \cite{Dong2018_DNN_prior,tirer2018image,gavaskar2020plug}. In these works,  either the denoiser or the loss function is constrained to satisfy some conditions. In particular, the denoiser is assumed to satisfy a descent condition in \cite{Dong2018_DNN_prior}, a boundedness condition in \cite{tirer2018image}, a linearity condition in \cite{gavaskar2020plug}, an averaged property in \cite{sun2019online,nair2021fixed,gavaskar2021plug,sun2021scalable} and demicontractivity in \cite{cohen2021regularization}. It was shown in \cite{Ryu2019_PnP_trained_conv} that PnP convergence is guaranteed for ISTA and ADMM for a specially trained CNN denoiser, provided the data-fidelity is strongly convex. Apart from \cite{Ryu2019_PnP_trained_conv}, PnP convergence has been established for CNN denoisers \cite{Meinhardt2017_learning_prox_op}, \cite{buzzard2018plug}, generative denoisers \cite{jagatap2019advance} and GAN projectors \cite{raj2019gan}. 

A natural question is whether the reconstruction obtained using the PnP method is optimal in some sense; in particular,  is the reconstruction a solution to some regularization problem of the form \eqref{opt}?   Till date, this question has only been resolved for linear denoisers. This includes kernel filters  such as Yaroslavsky \cite{yaroslavsky1985digital}, Lee \cite{lee1983digital}, bilateral \cite{tomasi1998bilateral}, nonlocal means (NLM) \cite{buades2005non} and LARK \cite{takeda2007kernel}, and non-kernel filters such as GMM \cite{teodoro2019image} and GLIDE \cite{talebi2013global}. More specifically, it is known that  if the linear denoiser is symmetric, then the PnP-ISTA and PnP-ADMM updates in \eqref{pnpista} and \eqref{pnpadmm} amounts to minimizing an objective of the form in \eqref{opt}, e.g., see \cite{sreehari2016plug,teodoro2019image}. 
It was later observed in  \cite{nair2021fixed,gavaskar2021plug} that one can associate a regularizer with a wider class of non-symmetric denoisers, including kernel denoisers such as NLM \cite{buades2005non}. More precisely, it was shown that if the linear operator $\mathfrak{D}$ is diagonalizable with eigenvalues in $[0, 1]$, then we can represent $\mathfrak{D}$ as the proximal operator of a convex regularizer $\Phi$. Two important points in this regard are (i) the effective domain of $\Phi$  is a subspace of $\Re^n$  and (ii) the norm in the definition of the proximal operator is a weighted $\ell_2$ norm and not the standard $\ell_2$ norm (the exact description is provided in Section \ref{qr}). The implication of this result is that performing the PnP updates in  \eqref{pnpista} and \eqref{pnpadmm} with a diagonalizable linear denoiser with eigenvalues in $[0,1]$ amounts to minimizing an objective function of the form $f+\Phi$. As will be made precise later, the regularizer $\Phi$ is derived from a linear denoiser which in turn is computed from some surrogate of the ground-truth image (which is derived from measurements).

\begin{figure}
\includegraphics[width=1.\linewidth]{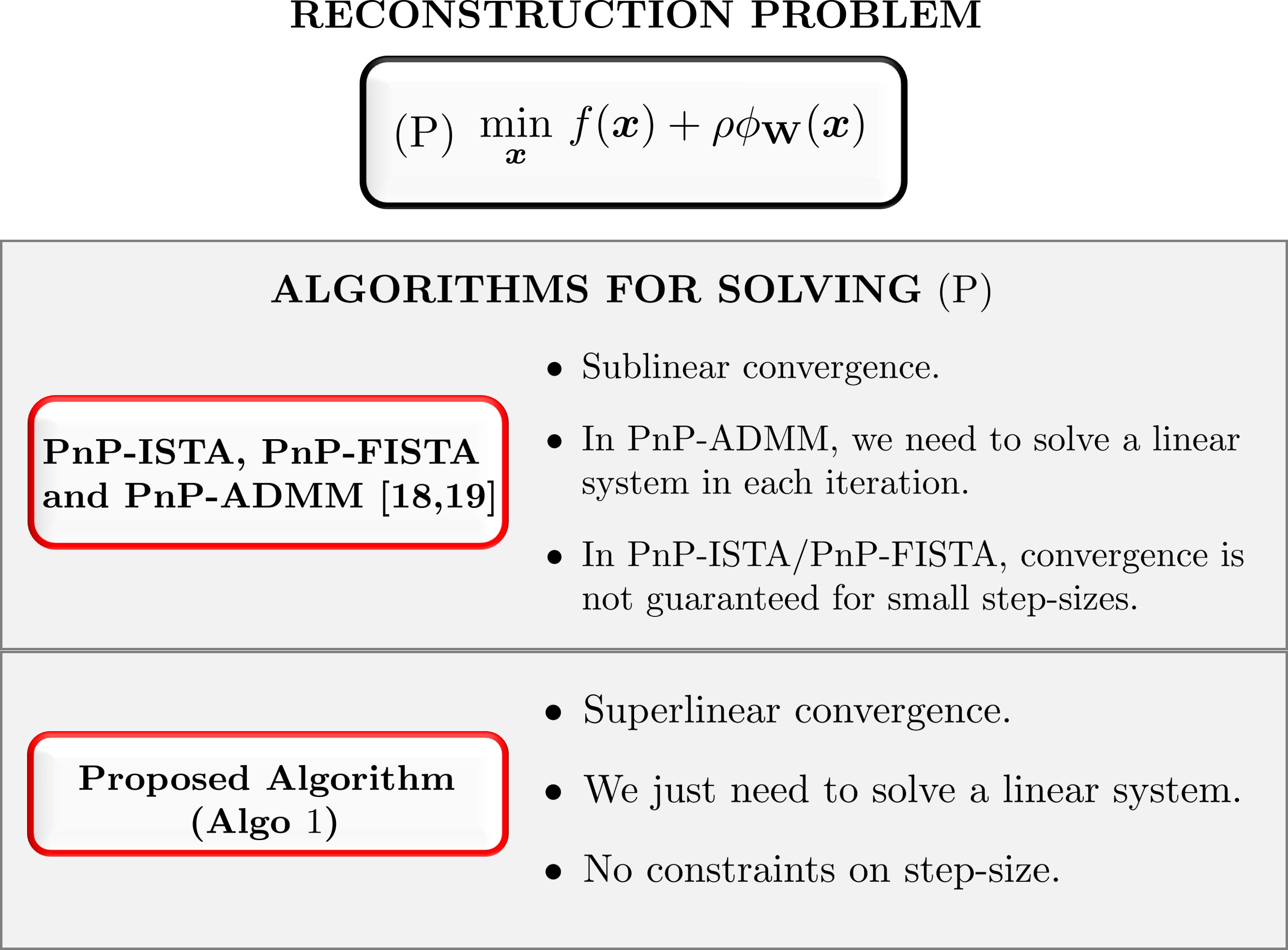}
\caption{Relation of the proposed algorithm with the PnP algorithms in   \cite{nair2021fixed,gavaskar2021plug}  for a linear denoiser $\W$. The reconstruction returned by such algorithms is a minimizer of a convex objective function $f+\Phi_{\W}$. Instead of iteratively computing the minimizer (as done in PnP algorithms), we propose to solve the first-order optimality condition for the problem, which can be reduced to solving a linear system if $f$ is quadratic. See Table \ref{compTIP21andourpaper} for more details.}
\label{contribution}
\end{figure}

\subsection{Motivation and contribution}

The present work is best motivated using an analogy with least-squares regression. Suppose that we want to minimize  ${\lVert \H \x - \b \rVert}^2$ with respect to $\x$. Among other things, we can compute the minimizer using either an iterative algorithm such as the gradient descent method or via the solution of the normal equation $\H^\top \!\H=\H^\top \! \b$ \cite{trefethen1997numerical}. Recall that the normal equation is the first-order optimality condition for the optimization problem at hand. In particular, if $\H$ has full column rank, then we can solve the normal equation using conjugate gradient, which has superior convergence than the sublinearly-convergent gradient descent \cite{shewchuk1994introduction}. 
Similarly, our main idea is that instead of the  iterative minimization in \eqref{pnpista} and \eqref{pnpadmm}, which in effect  minimizes $f+\Phi$, we can directly work with the first-order optimality condition and come up with an efficient algorithm (see Fig.~\ref{contribution}). The technical novelty in this regard are: 
\begin{itemize}
\item We prove that the minimization of $f+\Phi$ is well-posed, i.e., $f+\Phi$ is guaranteed to have a minimizer $\x^\ast \in \Re^n$ (Proposition \ref{minimizers}). Moreover, we prove that $\x^\ast$ is unique  for some imaging problems (Proposition \ref{uniqueness}).
\item We show that $\x^\ast$ can be computed by solving $\A\x=\b$. In particular, we prove that $\A$ is nonsingular for some reconstruction problems.
 
\item On the computational side, we show how $\x^\ast$ can be accurately computed in just few iterations using superlinearly convergent solvers such as GMRES \cite{saad1986gmres}, LGMRES \cite{baker2005technique}, GCROT \cite{hicken2010simplified} and Broyden \cite{broyden1965class,more1976global}, where each iteration requires us to apply $\F$, $\F^\top$ and a kernel denoiser.  Stated differently, we can obtain the same reconstruction as PnP-ISTA and PnP-ADMM, but unlike  PnP-ISTA and PnP-ADMM which are known to converge sublinearly \cite{beck2009fast,he20121}, we can achieve superlinear convergence (see Fig.~\ref{contribution} for a comparison of our and existing PnP solvers). The speedup is apparent in practice as well. Moreover, thanks to the superior convergence rate, the total inference time of our solvers (number of iterations  $\times$ time per iteration) are  comparable with PnP algorithms with CNN-based denoisers, though CNN-based denoisers are faster than kernel denoisers.

\end{itemize}
A sample inpainting result obtained using our algorithm is shown in Fig.~\ref{inpaintreal}, where the text is almost completely removed from the image. Surprisingly, for superresolution, deblurring, and inpainting, we are able to get close to state-of-the-art methods in terms of reconstruction quality. This demonstrates that as a natural image prior, $\Phi$ is more effective than classical wavelet and total-variation priors. Understanding this in greater depth is left to future work.
 
\subsection{Organization}
The paper is organized as follows. We present some background material about  kernel denoisers in Section \ref{kd} and existing results on the optimality of PnP algorithms for kernel denoisers in Section \ref{PnPkd}. We then explain the proposed work in Section \ref{kr} in three different subsections. In Section \ref{qr}, we provide a different formulation of the regularizer associated with the PnP algorithms in  \cite{nair2021fixed,gavaskar2021plug}. In Section \ref{reduction}, we show how the task of minimizing $f+\Phi$ can be reduced to solving a linear system. In Section III-C, we present our algorithm and perform  ablation studies on it; in particular, we experiment with different numerical solvers and demonstrate their effectiveness over existing PnP solvers. In Section \ref{exp}, we present results for superresolution, deblurring, and inpainting, and compare our method with top-performing methods. We conclude with a discussion of our findings in Section \ref{conc}. Proofs of technical results are deferred to Section \ref{Appendix}. 

\section{Background}
\label{br}

\subsection{Kernel denoiser}
\label{kd}
 
A kernel filter can abstractly be viewed as a linear transform $\W \in \Re^{n \times n}$ that takes a noisy image $\x \in \Re^n$ and returns the denoised output  $\W \x$, where $\x$ is some vectorized form of the image and $n$ is the number of pixels. We now explain this more precisely which will require some notations.

Let us denote the input image as $\{X_{\s}: \s \in \Omega\}$, where $\Omega  \subset \mathbb{Z}^2$ is the support of the image and $X_{\s}$ is the intensity value at pixel location $\s$. The essential point is that $\W$ is defined using a symmetric positive definite kernel $\phi : \Feat \times \Feat \to \Re$, where the feature space $\Feat$ is stipulated using a guide image $\{\u_{\s}: \s \in \Omega\}$. This is usually the input image for image denoising \cite{buades2005non}. However, in PnP regularization, the guide $\u$ is a surrogate of the ground-truth image $\btheta$ which is computed from the measurement \cite{sreehari2016plug,nair2021fixed, gavaskar2021plug}. The feature space in NLM is $\Feat= \Re^{2+p}$, where $p$ is the size of a pre-fixed square patch $P$ centered at the origin. In particular, the feature vector $\Bzeta_{\s} \in \Feat$ at pixel $\s \in \mathbb{Z}^2$ is obtained by concatenating the pixel position $\s$ and the intensity values $\{\u_{\s + \boldsymbol{\tau}}: \boldsymbol{\tau} \in P\} \in \Re^p$ from a patch around $\s$ in the guide image. The kernel in NLM is typically Gaussian and is defined as $\phi(\q,\r)=G(\q-\r)$, where  $G$ is a multivariate Gaussian   on $\Feat $ \cite{milanfar2013symmetrizing,Morel2014}.
The NLM operator $\mathfrak{D}$ which takes input $X$ and outputs image $\mathfrak{D}(X)$ is defined as
\begin{equation}
\label{neighbour}
\forall \ \s \in \Omega: \quad {(\mathfrak{D}(X))}_{\s} = \frac{\sum_{\t \in \Omega} \phi({\Bzeta}_{\s}, {\Bzeta}_{\t}) X_{\t}}{\sum_{\t \in \Omega} \phi({\Bzeta}_{\s}, {\Bzeta}_{\t})}.
\end{equation}
In other words, the output at a given pixel is obtained using a weighted average of its neighboring pixels, where the weights are derived from the similarity between features  $\Bzeta_{\s}$ and $\Bzeta_{\t}$ measured using kernel $\phi$.

It is clear from \eqref{neighbour} that $\mathfrak{D}$ is a linear operator. To obtain the matrix representation of \eqref{neighbour}, we 
need to linearly index the pixel space $\Omega$. Let $\sigma: [1, n] \to \Omega$ be such that every index  $\ell \in [1,n]$ is mapped to a unique pixel  $\sigma(\ell) \in \Omega$. Let $\x$ be the vector representation of  $X$ corresponding to this indexing, so that $\x_{\ell } = X_{\sigma(\ell)}$ for $\ell \in [1,n]$. Now, define the  kernel matrix $\K \in \Re^{n \times n}$ to be 
\begin{equation*}
\forall  i,j \in [1,n]: \quad \K_{ij}=\phi(\Bzeta_{\sigma(i)},\Bzeta_{\sigma(j)}),
\end{equation*}
and the diagonal matrix $\D \in \Re^{n \times n}$ to be $\D_{ii}=\sum_{j=1}^n \K_{ij}$ for $i \in [1,n]$. We can then express \eqref{neighbour} as the linear transform $\x \to \W\x$, where
\begin{equation}
\label{maineqnkernel}
\W:=\D^{-1} \K.
\end{equation}
By construction, both $\K$ and $\D$ are symmetric; $\K$ is positive semidefinite and $\D$ is positive definite \cite{milanfar2013tour}. We note that apart from image denoising \cite{buades2005non,takeda2007kernel,heide2014flexisp,nair2019hyperspectral,unni2020_pnp_registration}, kernel filters have also been used for graph signal processing \cite{yazaki2019interpolation,nagahama2021graph,gavaskar2022regularization}.

\subsection{Kernel denoiser as a proximal operator}
\label{PnPkd}

Kernel filters, and in particular NLM-type denoisers, have been shown to be very effective for  PnP regularization   \cite{sreehari2016plug,heide2014flexisp,nair2019hyperspectral,unni2020_pnp_registration}. In this section, we discuss  existing results on the optimality of the PnP reconstruction obtained using a kernel denoiser. 

Recall that in PnP, we replace the proximal operator $\mathrm{prox}_{g}$ in \eqref{ista} by a denoiser $\mathfrak{D}$ in \eqref{pnpista}. The question is whether we can express a given denoiser as a  proximal operator of some convex regularizer---convergence of the PnP iterates and optimality of the final reconstruction would immediately be settled if we can answer this in the affirmative. This question remains open for complex denoisers such as BM3D and DnCNN. However, it has been shown in \cite{nair2021fixed,gavaskar2021plug} that for the kernel filter given by \eqref{maineqnkernel}, one can indeed construct a convex regularizer $g$ such that $\W = \mathrm{prox}_{g}$.
We need some notations and definitions to explain this technical result.
 
Since $\D$ is positive definite, we can define the following inner product and norm on $\Re^n$:
\begin{equation}
\label{deninnerprod}
{\inner{\x}{\y}}_\D = \x^\top\! \D \y \quad  \text{and } \quad \norm{\x}_\D= \inner{\x}{\x}^{1/2}_\D.
\end{equation}
The latter can be viewed as a weighted $\ell_2$ norm. We associate a proximal operator with this norm.

\begin{definition} Let $g: \Re^n \to \Re \cup \{\infty\}$ be closed, proper and convex. Define $\prox_{g,\D}: \Re^n \to \Re ^n$ to be
\begin{equation}
\label{prox}
\prox_{g, \D}(\x) = \argmin_{\z \in \Re^n}\  \left\{ \frac{1}{2} \norm{\z - \x}_{\D}^2 + g(\z)\right\},
\end{equation}
We call $\prox_{g,\D}$ the scaled proximal operator of $g$. 
\end{definition}

The term ``scaled'' is used to emphasize that the weighted $\ell_2$ norm is used in \eqref{prox}. It is well-known that the objective in \eqref{prox} has a unique minimizer for the standard $\ell_2$ norm and that $\prox_{g}$ is well-defined \cite{bauschke2017convex}; it is not difficult to verify that this is also true for the weighted norm. 

The authors in \cite{nair2021fixed,gavaskar2021plug} showed that a kernel denoiser is a scaled proximal operator of a convex regularizer. 
This observation  was used to establish objective  and iterate convergence. We recall the exact formula for this regularizer which will play a central role in the rest of the discussion. 

It can be shown that the kernel filter is diagonalizable. In particular, let  $\W=\V\BLambda\V^{-1}$ be the eigendecomposition of \eqref{maineqnkernel}, where the columns of $\V$ are the eigenvectors of $\W$ and $\BLambda$ is a diagonal matrix whose first $r$ diagonal entries are in $(0, 1]$ and the remaining entries are $0$. We collect the $r$ positive eigenvalues in a diagonal matrix $\BLambda_r$ and  the  corresponding eigenvectors in $\U \in \Re^{n \times r}$. Then we can write $\W=\U\BLambda_r\U^\dagger$, where $\U^\dagger \in \Re^{r \times n}$ consists of the first $r$ rows of $\V^{-1}$. The authors in \cite{gavaskar2021plug} showed that $\W$ is a scaled proximal operator of the extended-valued convex function 
\begin{equation}
\label{regularizerold}
\displaystyle \Phi_{\W}(\x) =  \begin{cases}
     \dfrac{1}{2} \x^\top {\U^\dagger}^\top (\BLambda_r^{-1} - \I)\U^{\dagger} \x, & \mathrm{for } \ \x \in \R(\W).\\[5pt]
        \infty, & \mathrm{otherwise},          
\end{cases}
\end{equation}
Note that $\Phi_{\W}$ is completely determined by the kernel denoiser $\W$. Its domain is restricted to $ \R(\W)$ and this in effect forces the reconstruction to be in $ \R(\W)$. 

\section{Kernel Regularization}
\label{kr}
\subsection{PnP regularization using kernel denoiser}
\label{qr}

We first propose an alternate formulation for the regularizer  in \eqref{regularizerold} and prove that the scaled proximal operator of \eqref{regularizerold} is  indeed $\W$. Apart from keeping the account self-contained, our formulation makes the proofs simpler and helps in analyzing the optimization problem in \eqref{opt} with $g=\Phi_{\W}$.

Note that $\W$ is not symmetric, but it can be written as 
\begin{equation*}
\W=\D^{-1/2} \S \D^{1/2}, \quad \S:=\D^{-1/2} \K \D^{-1/2}.
\end{equation*}
Thus, $\W$ is similar to the symmetric matrix  $\S$ and is hence diagonalizable with real eigenvalues. More specifically, if $\S=\Q \BLambda \Q^\top\!$ is an eigendecomposition of $\S$, then we can write
\begin{equation*}
\W = \M \BLambda \M^{-1}, \quad   \M:=\D^{-1/2}\Q.
\end{equation*}
Define $\W^\dagger=\M \BLambda^\dagger \M^{-1}$, where $\BLambda^{\dagger}$ is diagonal and $\BLambda^{\dagger}_{ii} =  \BLambda_{ii}^{-1}$ if $\BLambda_{ii} > 0$, and $\BLambda^{\dagger}_{ii} =0$ otherwise. We remark that  $\W^\dagger$ is the generalized inverse of $\W$ and that $\W = \W \W^{\dagger} \W$ \cite{moore1920reciprocal}.

\begin{definition} Let $\W$ be a kernel denoiser. Define the  function $\Phi_{\W}: \Re^n \to \Re \cup \{\infty\}$ to be
\begin{equation}
\label{regularizer}
\displaystyle \Phi_{\W}(\x) =  \begin{cases}
     \dfrac{1}{2} \x^\top \D (\I - \W)\W^{\dagger} \x, & \mathrm{for } \ \x \in \R(\W),\\[5pt]
        \infty, & \mathrm{otherwise}.            
\end{cases}
\end{equation}
We call $\Phi_{\W}$ the kernel regularizer associated with $\W$.
\end{definition}

The extended-valued function $\Phi_{\W}$ is quadratic on its domain. Moreover, we have the following properties.
\begin{proposition}
\label{prop1}
$\Phi_{\W}$  is nonnegative, closed (epigraph is closed in $\Re^{n+1}$), proper (domain is nonempty), and convex.
\end{proposition}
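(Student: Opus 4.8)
The plan is to reduce the whole statement to a single symmetric positive semidefinite quadratic form restricted to a subspace, after which all four properties follow from standard facts in convex analysis. The technical crux is to simplify the matrix $\D(\I-\W)\W^{\dagger}$ appearing in \eqref{regularizer}. Using $\W=\M\BLambda\M^{-1}$ and $\W^{\dagger}=\M\BLambda^{\dagger}\M^{-1}$ together with $\M=\D^{-1/2}\Q$ and, by orthogonality of $\Q$, $\M^{-1}=\Q^{\top}\D^{1/2}$, I would compute $(\I-\W)\W^{\dagger}=\M(\I-\BLambda)\BLambda^{\dagger}\M^{-1}$ and then multiply by $\D$, using the cancellation $\D\M=\D^{1/2}\Q$, to obtain
\begin{equation*}
\D(\I-\W)\W^{\dagger}=\D^{1/2}\Q\,\P\,\Q^{\top}\D^{1/2}, \qquad \P:=(\I-\BLambda)\BLambda^{\dagger}.
\end{equation*}
Here $\P$ is diagonal with $\P_{ii}=\BLambda_{ii}^{-1}-1$ when $\BLambda_{ii}>0$ and $\P_{ii}=0$ otherwise. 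Since the positive eigenvalues of $\W$ lie in $(0,1]$, every $\P_{ii}\ge 0$, so $\P$ is positive semidefinite; consequently $\D(\I-\W)\W^{\dagger}$ is symmetric and positive semidefinite, being a congruence $\D^{1/2}\Q\,\P\,\Q^{\top}\D^{1/2}$ of $\P$.

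With this in hand, the four claims are quick. For nonnegativity, on $\R(\W)$ the value equals $\tfrac{1}{2}\x^{\top}\D(\I-\W)\W^{\dagger}\x\ge 0$ by positive semidefiniteness, and off the domain the value is $+\infty\ge 0$. For properness, the domain $\R(\W)$ is nonempty since it contains $\ZE$, with $\Phi_{\W}(\ZE)=0<\infty$, and nonnegativity rules out the value $-\infty$. For convexity and closedness, I would write $\Phi_{\W}=q+\iota_{\R(\W)}$, where $q(\x)=\tfrac{1}{2}\x^{\top}\D(\I-\W)\W^{\dagger}\x$ is an everywhere-finite positive semidefinite quadratic (hence continuous and convex) and $\iota_{\R(\W)}$ is the indicator of the subspace $\R(\W)$. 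Because $\R(\W)$ is a subspace of the finite-dimensional space $\Re^{n}$, it is closed and convex, so $\iota_{\R(\W)}$ is lower semicontinuous and convex; since no $\infty-\infty$ ambiguity arises ($q$ is finite), the sum of a continuous convex function and a closed convex function is again convex and closed.

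The only genuinely nontrivial step is the matrix identity and the sign of the diagonal entries of $\P$. I would verify carefully the cancellation $\D\M=\D^{1/2}\Q$ and the product $(\I-\BLambda)\BLambda^{\dagger}$ on the kernel block, where $\BLambda_{ii}=0$, to be sure the vanishing eigenvalues contribute exactly $0$ (via $\BLambda^{\dagger}_{ii}=0$) rather than an undefined quantity, and that the surviving entries $\BLambda_{ii}^{-1}-1$ are nonnegative precisely because the eigenvalues lie in $(0,1]$. Once positive semidefiniteness is established, the remaining properties are a direct application of the sum rule for closed convex functions and require no further computation.
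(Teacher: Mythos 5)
Your proof is correct and follows essentially the same route as the paper: the key step in both is the factorization $\D(\I-\W)\W^{\dagger} = (\D^{1/2}\Q)\,(\I-\BLambda)\BLambda^{\dagger}\,(\D^{1/2}\Q)^{\top}$ with the diagonal middle factor nonnegative because the nonzero eigenvalues of $\W$ lie in $(0,1]$, which yields symmetry and positive semidefiniteness and hence nonnegativity and convexity, while properness follows from $\R(\W)\neq\emptyset$ in both cases. The only immaterial difference is the closedness step: you use the sum rule for the finite quadratic plus the indicator of the closed subspace $\R(\W)$, whereas the paper cites the equivalent standard fact that a function continuous on a closed domain is closed.
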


In particular, it follows from Proposition \ref{prop1} that $\prox_{\Phi_{\W}, \D}$ is well-defined. 
This brings us to the key result---the characterization of a kernel denoiser as a proximal operator.
\begin{theorem}
\label{proximalmap}
$\W$ is a scaled proximal operator of $\Phi_{\W}$, i.e., 
\begin{equation*}
\forall \ \x \in \Re^n: \quad \W\x = \prox_{\Phi_{\W}, \D}(\x) .
\end{equation*}
\end{theorem}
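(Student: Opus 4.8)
The plan is to use the fact that $\prox_{\Phi_{\W},\D}$ is well-defined (which follows from Proposition \ref{prop1} together with strong convexity of $\frac{1}{2}\norm{\z-\x}_{\D}^2$) and to verify that $\W\x$ is the unique minimizer of $h(\z):=\frac{1}{2}\norm{\z-\x}_{\D}^2+\Phi_{\W}(\z)$ by checking a first-order optimality condition. Since $h$ is strongly convex it has a unique minimizer, so it suffices to exhibit one point satisfying optimality. The essential subtlety is that $\Phi_{\W}$ equals $+\infty$ off $\R(\W)$, so the minimization is effectively carried out over the \emph{subspace} $\R(\W)$, on which $h$ agrees with the smooth quadratic $J(\z)=\frac{1}{2}(\z-\x)^\top\D(\z-\x)+\frac{1}{2}\z^\top\D(\I-\W)\W^{\dagger}\z$.

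First I would record that $\D(\I-\W)\W^{\dagger}$ is symmetric: indeed $\D\W=\K$ is symmetric, and from $\W^{\dagger}=\M\BLambda^{\dagger}\M^{-1}$ with $\M=\D^{-1/2}\Q$ one gets $\D\W^{\dagger}=\D^{1/2}\Q\BLambda^{\dagger}\Q^\top\D^{1/2}$, which is symmetric, while $\D\W\W^{\dagger}=\D^{1/2}\Q(\BLambda\BLambda^{\dagger})\Q^\top\D^{1/2}$ is symmetric as well. Consequently $\nabla J(\z)=\D(\z-\x)+\D(\I-\W)\W^{\dagger}\z$, and the optimality condition for minimizing the smooth $J$ over the subspace $\R(\W)$ is that $\nabla J(\z^\ast)$ be orthogonal to $\R(\W)$ in the standard inner product, i.e. $\inner{\nabla J(\z^\ast)}{\v}=0$ for all $\v\in\R(\W)$. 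Note this is genuinely weaker than $\nabla J(\z^\ast)=0$, which is the point to handle carefully.

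Next I would evaluate the gradient at $\z=\W\x$. Setting $\P:=\W\W^{\dagger}$ and using the generalized-inverse identities $\W^{\dagger}\W=\W\W^{\dagger}=\P$ and $\W\P=\W\W^{\dagger}\W=\W$, one finds $(\I-\W)\W^{\dagger}\W\x=\P\x-\W\x$, whence $\nabla J(\W\x)=\D(\W\x-\x)+\D(\P\x-\W\x)=-\D(\I-\P)\x$. It remains to show this residual is orthogonal to $\R(\W)$. The crux is the identity $\P^\top\D=\D\P$, which follows from $\D\P=\D^{1/2}\Q(\BLambda\BLambda^{\dagger})\Q^\top\D^{1/2}$ being symmetric (as $\BLambda\BLambda^{\dagger}$ is diagonal). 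Then for any $\v\in\R(\W)$ we have $\P\v=\v$, so $\inner{-\D(\I-\P)\x}{\v}=-\x^\top(\I-\P)^\top\D\v=-\x^\top\D(\I-\P)\v=-\x^\top\D(\v-\P\v)=0$. This confirms the optimality condition, and by uniqueness of the minimizer we conclude $\W\x=\prox_{\Phi_{\W},\D}(\x)$.

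The main obstacle I anticipate is conceptual rather than computational: because $\Phi_{\W}$ is extended-valued, one must set up the optimality condition as orthogonality of the Euclidean gradient to $\R(\W)$ rather than vanishing of the gradient (the latter indeed fails, since $\nabla J(\W\x)=-\D(\I-\P)\x\neq 0$ whenever $\x\notin\R(\W)$). Everything then reduces to the two algebraic facts that $\D\W\W^{\dagger}$ is symmetric and $\W\W^{\dagger}\W=\W$, both of which read off transparently from the eigendecomposition $\W=\M\BLambda\M^{-1}$ with $\M=\D^{-1/2}\Q$; this transparency is precisely the advantage of the reformulated regularizer in \eqref{regularizer}.
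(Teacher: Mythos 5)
Your proof is correct and takes essentially the same route as the paper's: both arguments verify the first-order optimality condition for the quadratic over the subspace $\R(\W)$ at the candidate point $\W\x$, relying on the same two facts ($\W\W^{\dagger}\W=\W$ and the symmetry relations induced by $\D$). The only cosmetic difference is that you package the computation through the projector $\P=\W\W^{\dagger}$ and show the gradient residual $-\D(\I-\P)\x$ is orthogonal to $\R(\W)$, whereas the paper establishes the equivalent matrix identity $\W^\top\!\D\bigl[(\W-\I)+(\I-\W)\W^{\dagger}\W\bigr]=\mathbf{0}$, so that the directional derivative vanishes identically.
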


We now relate the above results to the PnP updates in \eqref{pnpista} and \eqref{pnpadmm} where the denoiser is $\mathfrak{D}=\W$. In this case, it follows from Theorem \ref{proximalmap} that we can write \eqref{pnpista} as
\begin{equation}
\label{pnpista-prox}
\x_{k+1} = \prox_{\Phi_{\W},\D} \big(\x_k- \rho^{-1} \nabla \!f(\x_k)\big).
\end{equation}
This looks like standard ISTA \cite{beck2009fast,bauschke2017convex}, except that we have the scaled proximal operator instead of the standard proximal operator. In particular, if we replace $\nabla \!f$ (gradient of $f$ w.r.t. $\ell_2$ norm) in \eqref{pnpista-prox} with $\D^{-1}\nabla \!f$ (gradient of $f$ w.r.t. weighted-$\ell_2$ norm), then PnP updates amount to minimizing $f+\rho \Phi_{\W}$ \cite{nair2021fixed,gavaskar2021plug}. Similarly, we can write the PnP-ADMM updates as 
\begin{align}
\label{pnpadmm-prox}
& \x_{k+1} = \underset{\x}{\mathrm{argmin}} \ f(\x) + \frac{1}{2 \rho}\| \x - (\v_k - \z_k) \|_\D^2, \\
& \v_{k+1} = \prox_{\Phi_{\W},\D} (\x_{k+1} + \z_k), \nonumber \\ 
& \z_{k+1} = \z_k + (\x_{k+1} - \v_{k+1}), \nonumber
\end{align}
Again, this looks like standard ADMM except that the weighted $\ell_2$ norm is used instead of the standard $\ell_2$ norm for the $\x$ update.  In summary, if we use a kernel denoiser for PnP regularization and we make some minor algorithmic adjustments to the PnP updates, the resulting iterates are guaranteed to converge to the minimum of $f+ \rho \Phi_{\W}$. Thus, the seemingly ad-hoc idea of ``plugging'' a denoiser into the reconstruction process does amount to solving a regularization problem if we use a kernel denoiser. 

Having been able to associate an optimization problem with the PnP iterations, a natural question is whether we can solve the optimization problem more efficiently. This would provide an alternate means for performing PnP regularization. We next show that this is indeed the case for linear inverse problems.

\subsection{Reduction to a linear system}
\label{reduction}

Recall from the previous discussion that PnP regularization using a kernel denoiser amounts to solving \eqref{opt}, where $f$ and $g$ are given by \eqref{loss}. In other words, the optimization problem associated with PnP-ISTA and PnP-ADMM is 
\begin{equation}
\label{pnpopt}
\underset{\x \in \R(\W)}{\mathrm{min}} \ \left\{  \frac{1}{2} \| \y - \F\x \|^2_2 + \frac{\rho}{2} \x^\top \D (\I - \W) \W^{\dagger} \x \right\}. 
\end{equation}

It is not obvious if \eqref{pnpopt} has a minimizer since $\F$ has a non-trivial null space \cite{boyd2004convex}. 
In \cite{nair2021fixed,gavaskar2021plug}, the authors have assumed the existence of a minimizer without proof. We now establish this  rigorously. 
\begin{proposition}
\label{minimizers}
The optimization problem \eqref{pnpopt} is solvable. In particular, consider the linear system  $\A\z=\b$, where 
\begin{equation}
\label{linearsystem}
\A=\W^\top \F^\top \F \W + \rho \W^\top \D (\I - \W), \ \  \b= \W^\top\F^\top \y.
\end{equation}
Then $\A\z=\b$ is solvable. Moreover, $\x^\ast$ is a minimizer of  \eqref{pnpopt} if and only if $\x^\ast=\W \z$ where $\z$ is a solution of $\A\z=\b$.

\end{proposition}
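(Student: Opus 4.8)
The plan is to exploit the fact that the feasible set $\R(\W)$ is a linear subspace and to parameterize it as $\x = \W\z$, $\z \in \Re^n$, which maps onto $\R(\W)$. Writing $h(\x) = \tfrac{1}{2}\|\y - \F\x\|^2 + \tfrac{\rho}{2}\x^\top\D(\I-\W)\W^\dagger\x$ for the smooth extension of the objective to all of $\Re^n$ (the quadratic form is defined everywhere), I first note that $h$ is convex because $\D(\I-\W)\W^\dagger = \D^{1/2}(\S^\dagger - \S\S^\dagger)\D^{1/2}$ is symmetric positive semidefinite: in the eigenbasis $\S = \Q\BLambda\Q^\top$ of the symmetric matrix $\S$, the bracketed factor $\S^\dagger - \S\S^\dagger$ is diagonal with entries $\lambda_i^{-1}-1 \ge 0$ for the positive eigenvalues and $0$ otherwise. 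Since $\R(\W)$ is a subspace, the point $\x^\ast \in \R(\W)$ minimizes $h$ over $\R(\W)$ if and only if $\nabla h(\x^\ast) \perp \R(\W)$, i.e. $\W^\top \nabla h(\x^\ast) = 0$, and by convexity this condition is both necessary and sufficient. So the whole proposition will follow once I convert $\W^\top\nabla h(\x^\ast)=0$, with $\x^\ast=\W\z$, into $\A\z=\b$.

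The algebraic heart of the argument is the identity
\[
\W^\top \D(\I-\W)\W^\dagger\W = \W^\top \D(\I-\W),
\]
which I would verify by passing to $\S = \D^{-1/2}\K\D^{-1/2}$: using $\W = \D^{-1/2}\S\D^{1/2}$ and $\W^\dagger = \D^{-1/2}\S^\dagger\D^{1/2}$, the left side is $\D^{1/2}\S(\I-\S)\S^\dagger\S\D^{1/2}$, and since $\S(\I-\S)$ is supported on $\R(\S)$ the projection $\S^\dagger\S$ leaves it unchanged, giving $\D^{1/2}\S(\I-\S)\D^{1/2}$, which is also the value of the right side. The same computation shows $\rho\W^\top\D(\I-\W) = \rho\D^{1/2}\S(\I-\S)\D^{1/2}$ is symmetric positive semidefinite (as $\S(\I-\S)$ has eigenvalues $\lambda_i(1-\lambda_i)\ge 0$), and since $\W^\top\F^\top\F\W$ is symmetric positive semidefinite as well, the matrix $\A$ in \eqref{linearsystem} is symmetric positive semidefinite. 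Substituting $\x^\ast = \W\z$ into $\W^\top\nabla h(\x^\ast) = 0$ and using the displayed identity to eliminate $\W^\dagger\W$ collapses the optimality condition to $\A\z = \b$; this single computation yields both directions of the claimed equivalence, and the identity makes it independent of which preimage $\z$ of $\x^\ast$ is chosen.

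It remains to prove that $\A\z = \b$ is solvable, which simultaneously delivers existence of a minimizer. Because $\A$ is symmetric, $\R(\A) = \N(\A)^\perp$, so it suffices to show $\b \perp \N(\A)$. If $\A\z = 0$ then $\z^\top\A\z = 0$; as $\A$ is a sum of two positive semidefinite terms each must vanish, and in particular $\|\F\W\z\|^2 = \z^\top\W^\top\F^\top\F\W\z = 0$, forcing $\F\W\z = 0$. Hence $\inner{\b}{\z} = \y^\top\F\W\z = 0$, so $\b \in \R(\A)$ and the system is solvable. Taking any solution $\z$ and setting $\x^\ast = \W\z$ produces a bona fide minimizer via the subspace optimality characterization, completing all three assertions. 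The main obstacle I anticipate is establishing the displayed identity cleanly and confirming the symmetry and positive semidefiniteness of $\A$; once these are secured, the reduction to verifying $\b\perp\N(\A)$ and the minimization-over-a-subspace characterization are routine.
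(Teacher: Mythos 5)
Your proof is correct and takes essentially the same route as the paper's: both reduce the problem through the parameterization $\x=\W\z$ to the linear system $\A\z=\b$ (you via the subspace optimality condition $\W^\top \nabla h(\x^\ast)=\mathbf{0}$, the paper by pulling the objective back to an unconstrained convex function $\theta(\z)$ with $\nabla\theta(\z)=\A\z-\b$), and both obtain solvability from the decomposition of $\A$ into the two symmetric positive semidefinite summands $\W^\top\F^\top\F\W$ and $\rho\W^\top\D(\I-\W)$. Your direct check that $\b\perp\N(\A)$ is just an inlined version of the paper's range-inclusion lemma (Proposition \ref{prop2}), and your explicit identity $\W^\top\D(\I-\W)\W^\dagger\W=\W^\top\D(\I-\W)$ makes rigorous a simplification the paper uses implicitly when writing down $\theta$.
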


We remark that although $\W^{\dagger}$ appears in \eqref{pnpopt}, it does not come up in \eqref{linearsystem}. This is important since computing $\W^{\dagger}$ from $\W$ via its eigendecomposition would be impractical given the black-box nature of $\W$ and its size. On the other hand, note that $\W^\top = \D\W\D^{-1}$. Hence, computing $\W^\top\x$ has the same cost as that of computing $\W\x$  since $\D$ is a diagonal matrix. 

Though the linear system $\A\z=\b$ is solvable, $\A$ can nevertheless be singular. However, most iterative linear solvers in their original form (e.g. Krylov methods) require $\A$ to be nonsingular for theoretical convergence  \cite{simoncini2005occurrence}. The following proposition is useful in this regard.
\begin{proposition}
\label{uniqueness}
Let $\e \in \Re^n$ be the all-ones vector. If $\W$ is nonsingular and $\F\e \neq \mathbf{0}$, then $\A$ is nonsingular.  
\end{proposition}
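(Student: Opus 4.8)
The plan is to prove that $\A$ is injective; since $\A$ is a square matrix, injectivity is equivalent to nonsingularity. So I would fix $\z \in \Re^n$ with $\A\z = \mathbf{0}$ and aim to show $\z = \mathbf{0}$. The mechanism is to pair the equation with $\z$ to get $\z^\top\A\z = 0$, display $\z^\top\A\z$ as a sum of two manifestly nonnegative quadratic forms (each of which must then vanish), and squeeze the conclusion out of those two conditions together with the hypotheses.

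First I would record the spectral facts about $\W$ coming from the decomposition $\W = \D^{-1/2}\S\D^{1/2}$ with $\S = \D^{-1/2}\K\D^{-1/2}$ symmetric positive semidefinite. Then $\W$ has the same real, nonnegative eigenvalues as $\S$; since $\W$ is row-stochastic with $\W\e = \e$, its spectrum lies in $[0,1]$, and the assumption that $\W$ is nonsingular pushes the spectrum of $\S$ into $(0,1]$. Next I would compute directly $\z^\top\A\z = \norm{\F\W\z}^2 + \rho\,\z^\top\W^\top\D(\I-\W)\z$. Using the similarity transform one checks that $\W^\top\D(\I-\W) = \D^{1/2}\S(\I-\S)\D^{1/2}$, which is symmetric and positive semidefinite because the eigenvalues of $\S(\I-\S)$ are $\mu(1-\mu) \ge 0$ for $\mu \in [0,1]$. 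Hence both terms are nonnegative, and $\z^\top\A\z = 0$ forces each to vanish.

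The first condition gives $\F\W\z = \mathbf{0}$. For the second, setting $\w = \D^{1/2}\z$ turns $\z^\top\D^{1/2}\S(\I-\S)\D^{1/2}\z = 0$ into $\w^\top \S(\I-\S)\w = 0$, so $\w$ lies in the kernel of $\S(\I-\S)$; since $\S$ is nonsingular, the only eigenvalue of $\S$ making $\mu(1-\mu)$ vanish is $\mu = 1$, whence $\S\w = \w$, i.e. $\W\z = \z$. Combining with $\F\W\z = \mathbf{0}$ yields $\F\z = \mathbf{0}$. To finish I would invoke the Perron-Frobenius structure: because the kernel weights are strictly positive (the Gaussian kernel of NLM makes $\K$ entrywise positive), $\W$ is an irreducible stochastic matrix, so its eigenvalue $1$ is simple with eigenspace exactly $\mathrm{span}\{\e\}$. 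Thus $\z = c\,\e$, and $\F\z = c\,\F\e = \mathbf{0}$ with $\F\e \neq \mathbf{0}$ forces $c = 0$, i.e. $\z = \mathbf{0}$.

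I expect the last step to be the main obstacle: the argument genuinely needs the fixed space of $\W$ to be one-dimensional and equal to $\mathrm{span}\{\e\}$, since that is precisely what prevents a nonzero vector from being simultaneously fixed by $\W$ and annihilated by $\F$. This simplicity is not implied by $\W$ merely being nonsingular; it requires the positivity (irreducibility) of the kernel weights, which I would state explicitly as the key structural input. The verification that $\W^\top\D(\I-\W)$ is positive semidefinite is the other place demanding care, but it is routine once the similarity to $\S$ is in hand, and the role of the nonsingularity hypothesis is exactly to eliminate the eigenvalue $0$ and thereby isolate the eigenvalue $1$ in the kernel of $\S(\I-\S)$.
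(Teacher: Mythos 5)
Your proof is correct and follows essentially the same route as the paper's: both decompose $\A$ as the sum of the positive semidefinite matrices $\W^\top\F^\top\F\W$ and $\rho\,\W^\top\D(\I-\W)$, reduce nonsingularity to the triviality of the intersection of their null spaces (you inline, via the quadratic form $\z^\top\A\z=0$, exactly the argument the paper isolates as a separate lemma on sums of positive semidefinite matrices), identify the second null space with $\mathrm{span}\{\e\}$ using nonsingularity of $\W$ together with Perron--Frobenius/irreducibility, and finish with $\F\e\neq\mathbf{0}$. Your explicit check that $\W^\top\D(\I-\W)=\D^{1/2}\S(\I-\S)\D^{1/2}$ is positive semidefinite, and your flagging of irreducibility (entrywise positivity of the kernel) as a structural input not implied by the stated hypotheses, are details the paper glosses over but relies on in the same way.
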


For deblurring, superresolution and inpainting, $\F$ trivially satisfies the hypothesis in Proposition \ref{uniqueness} since $\e$ does not belong to the null space of $\F$. On the other hand, for bilateral and NLM filters, $\W$ is nonsingular if the spatial component of the kernel function $\phi$ is a hat function \cite[Theorem $3.13$]{nair2021fixed}. 

Note that if $\W$ is nonsingular, then $\A\z=\b$ can equivalent be expressed as $\C\z=\d$, where
\begin{equation}
\label{linearsystem2}
\C= \F^\top \F \W + \rho \D (\I - \W) \quad \text{and} \quad \d= \F^\top \y.\end{equation}
The advantage with \eqref{linearsystem2} compared to \eqref{linearsystem} is that $\W^\top$ does not appear in $\C$. However, unlike $\A$, $\C$ is no longer guaranteed to be symmetric.  

\begin{figure*}
\centering
\subfloat{\includegraphics[width=0.24\linewidth]{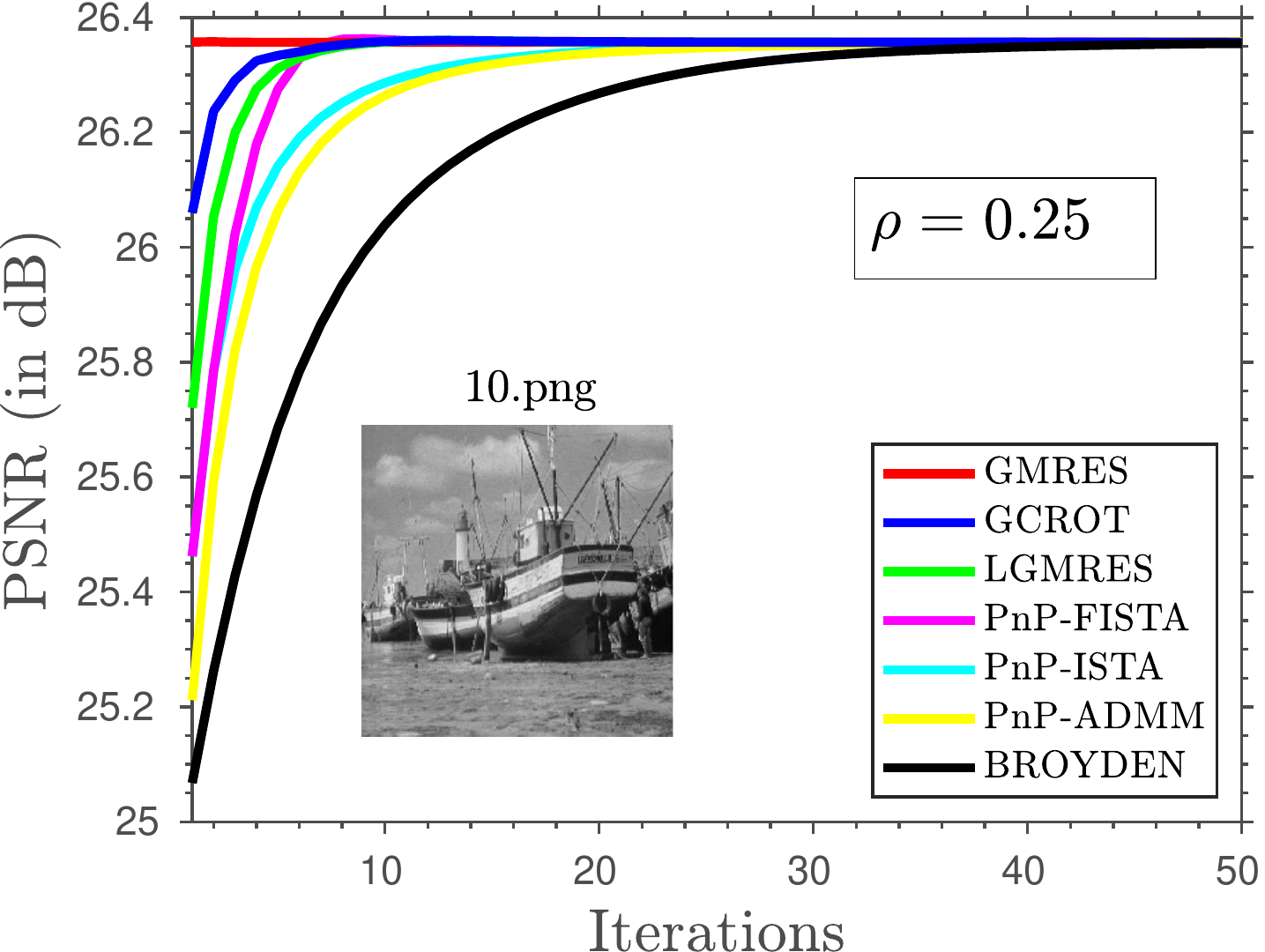}}
\hspace{0.1mm}
\subfloat{\includegraphics[width=0.24\linewidth,height=0.181\linewidth]{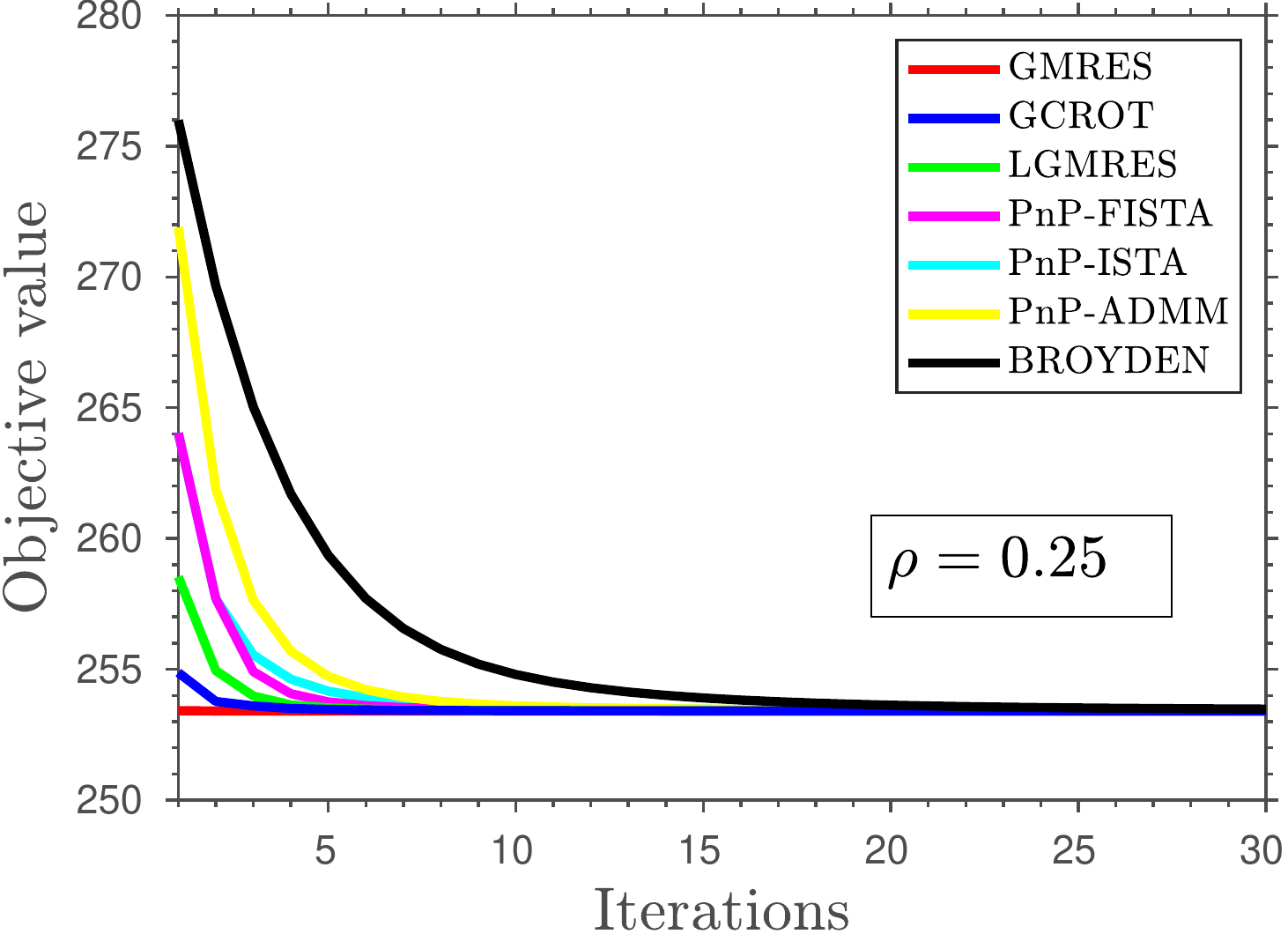}} 
\subfloat{\includegraphics[width=0.24\linewidth,height=0.181\linewidth]{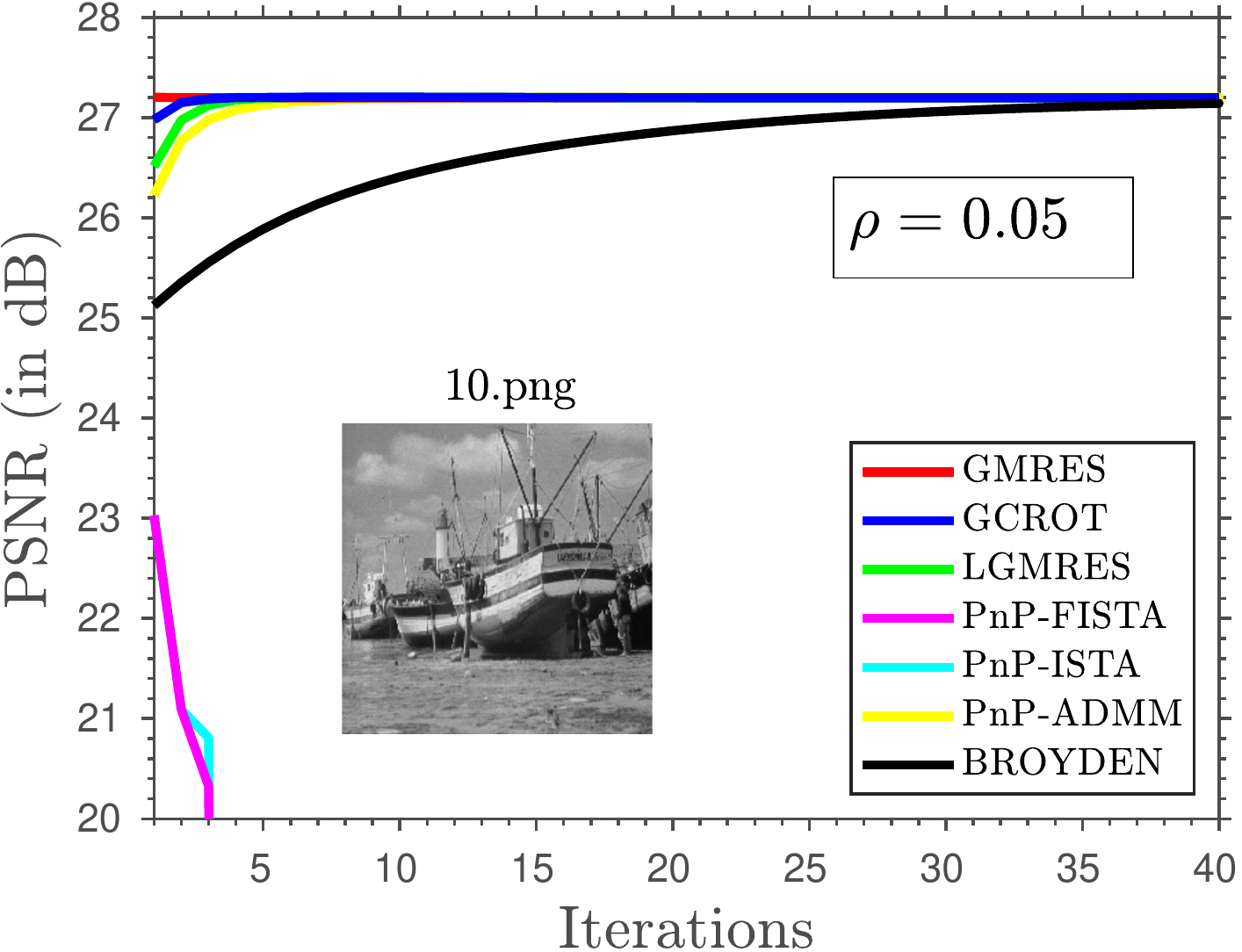}} 
\hspace{0.1mm}\subfloat{\includegraphics[width=0.24\linewidth]{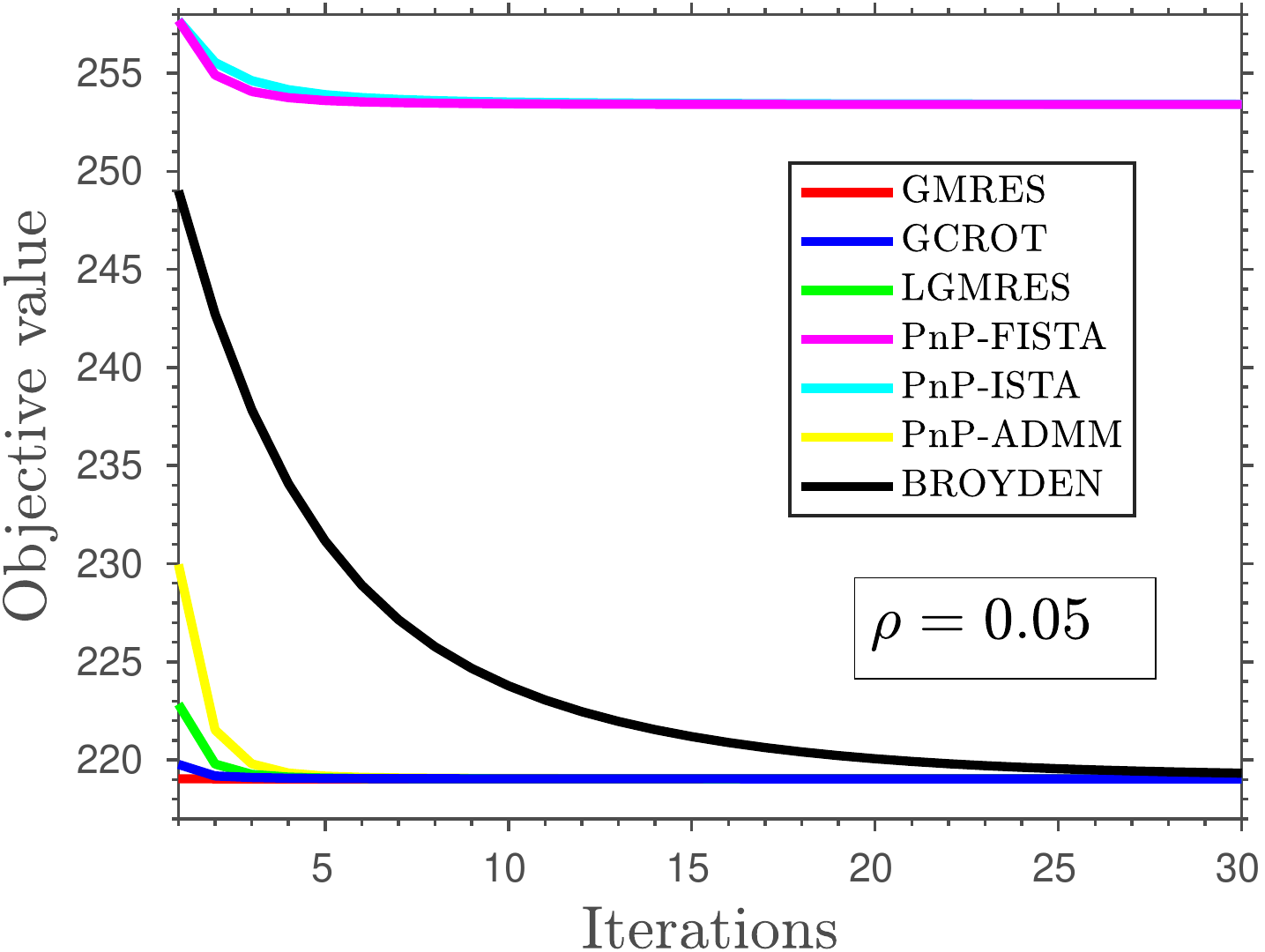}} 
\caption{Comparison of the efficiency of iterative solvers for image deblurring. A Gaussian  blur of $25 \times 25$ and standard deviation $1.6$ is used and the noise level is $\sigma=10/255$. The plots show the evolution of  PSNR and objective function  \eqref{pnpopt} for $\rho = 0.25$ (left column) and  $\rho = 0.05$ (right column). We have used NLM as the denoiser. The comparison is between PnP-ISTA, PnP-FISTA and PnP-ADMM \cite{nair2021fixed,gavaskar2021plug} and  the proposed linear solvers  GMRES, GCROT and LGMRES. As expected, our  solvers exhibit faster convergence than the iterative minimization of \eqref{pnpopt} using PnP-ISTA, PnP-FISTA and PnP-ADMM. For example, GCROT requires just $40\%$ percent of the iterations taken by PnP-ADMM to stabilize. Note that PnP-ISTA and PnP-FISTA diverges when $\rho =0.05$.}
\label{Empconv}
\end{figure*}

\begin{figure}
\centering
\subfloat{\includegraphics[width=0.48\linewidth]{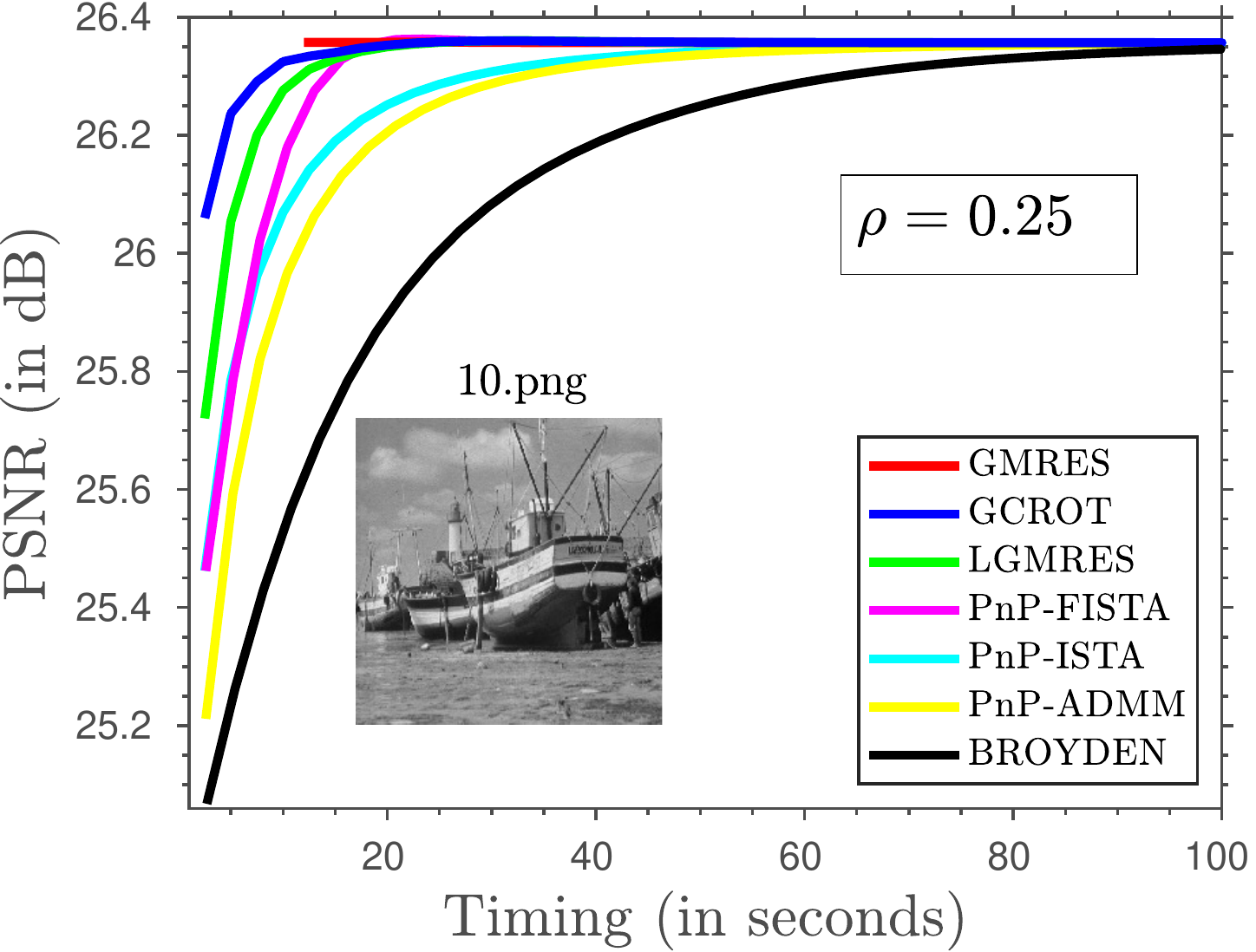}}
\hspace{0.1mm}
\subfloat{\includegraphics[width=0.48\linewidth]{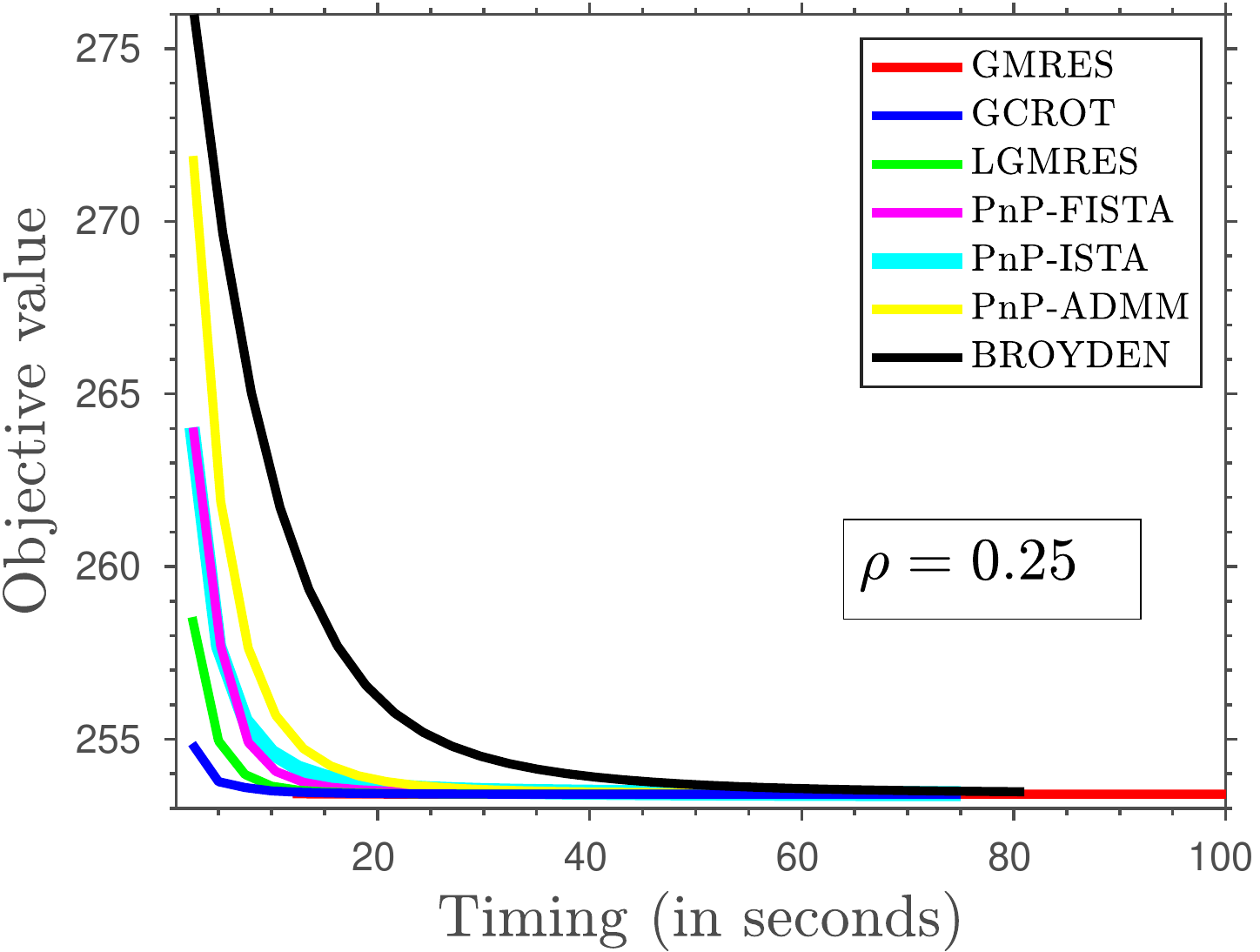}} 
\caption{Efficiency comparison of the algorithms for the experiment in Fig.~\ref{Empconv}. The linear solvers are faster to stabilize than the iterative minimization of \eqref{pnpopt} using PnP algorithms.}
\label{Empconvtime}
\end{figure}

\subsection{Proposed algorithm}
\label{propsolver}

\begin{algorithm}
\KwIn{Measurements $\y$ and forward operator $\F$.} 
\KwOut{Solution of \eqref{pnpopt}.}
Form the guide image $\u$ from measurements $\y$\; \label{surrogate}
Construct the denoising operator  $\W$ from $\u$\; 
Use $\W$ and $\F$ to construct operator $\C$\;
Solve $\C\z =\F^\top\y$ \label{oursolver}\;
Return $\W\z$.
\caption{Regularization using a linear solver}
\label{propalgo}  
\end{algorithm}

The overall algorithm stemming from the previous discussion is summarized in Algorithm \ref{propalgo}. In the first step, the guide image $\u$ is derived from the measurements $\y$ using about $5\mbox{-}10$ PnP iterations, as done in \cite{sreehari2016plug,Teodoro2019PnPfusion,nair2021fixed,gavaskar2021plug}. This is used to construct the kernel denoiser $\W$ and in turn the operator $\C$. We choose NLM as the  preferred kernel denoiser because of its superior regularization capabilities, where the features are derived from the guide image $\u$ as explained in Section \ref{kd}. The final and core step is the solution of $\C\z =\F^\top\y$. We have used  iterative solvers such as LGMRES and GCROT  (using \textit{linalg} module in the \textit{scipy} package). 
which seem to provide the best tradeoff between iteration complexity and the number of iterations required to converge. This is  highlighted in Fig.~\ref{Empconv}. 

At this point, we wish to clarify that $\W$, $\F$ and $\F^\top$ in \eqref{linearsystem2} are not stored as matrices. Rather, they are implemented efficiently as input-output black boxes. Indeed, $\W\x$ can be computed directly from \eqref{neighbour}. For deblurring, $\F$ and $\F^\top$ are implemented using spatial convolution (narrow blur) or FFT (wide blur). For superresolution, $\F$ is implemented using spatial convolution followed by decimation and $\F^\top$ is implemented using upsampling followed by convolution. For inpainting, computing $\F$ (and $\F^\top=\F$) amounts to sampling the observed pixels.

\begin{figure*}
\centering
  \subfloat{\includegraphics[width=0.105\linewidth]{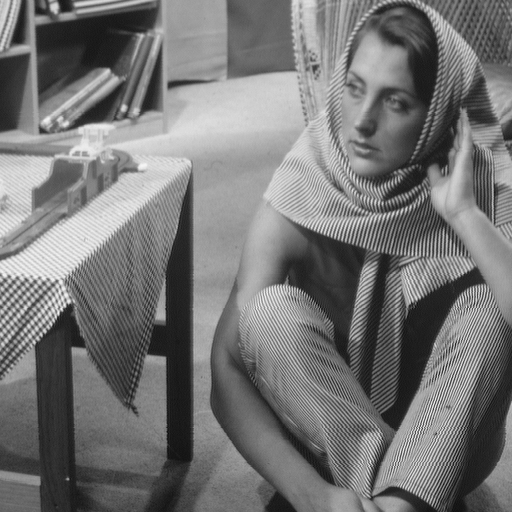}} \hspace{0.01mm}
  \subfloat{\includegraphics[width=0.105\linewidth]{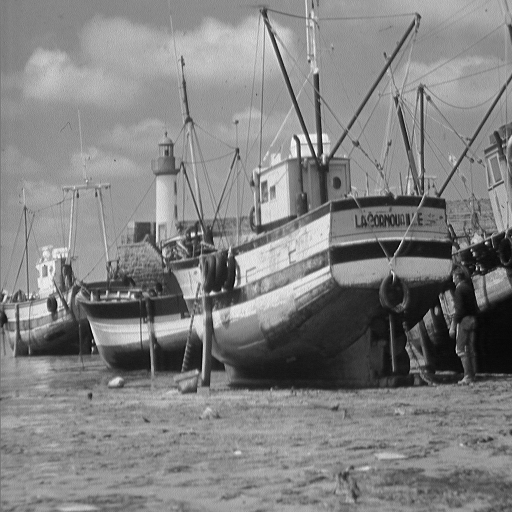}} \hspace{0.01mm}
  \subfloat{\includegraphics[width=0.105\linewidth]{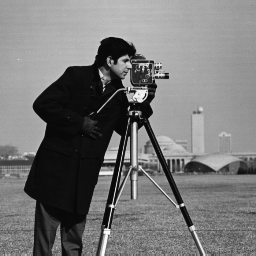}} \hspace{0.01mm}
  \subfloat{\includegraphics[width=0.105\linewidth]{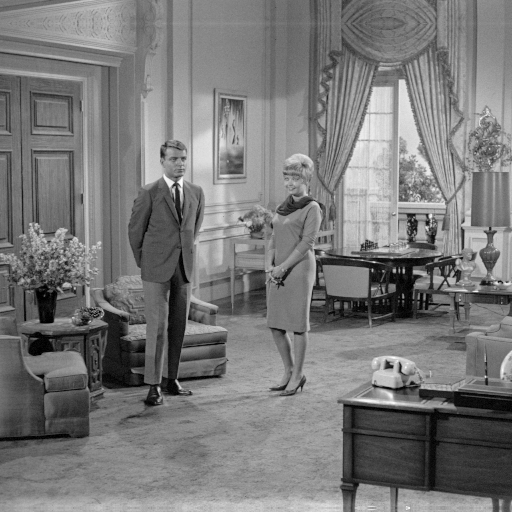}} \hspace{0.01mm}
  \subfloat{\includegraphics[width=0.105\linewidth]{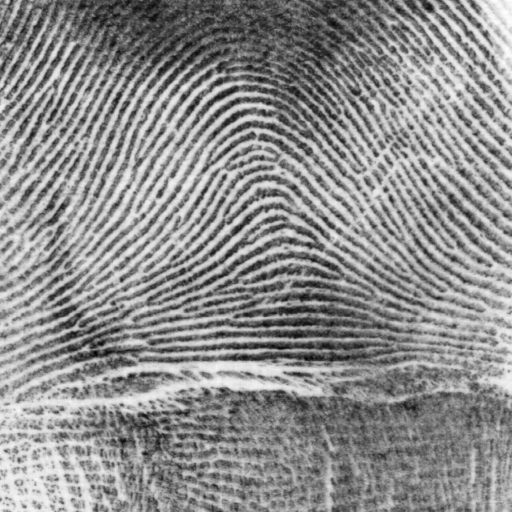}} \hspace{0.01mm}
  \subfloat{\includegraphics[width=0.105\linewidth]{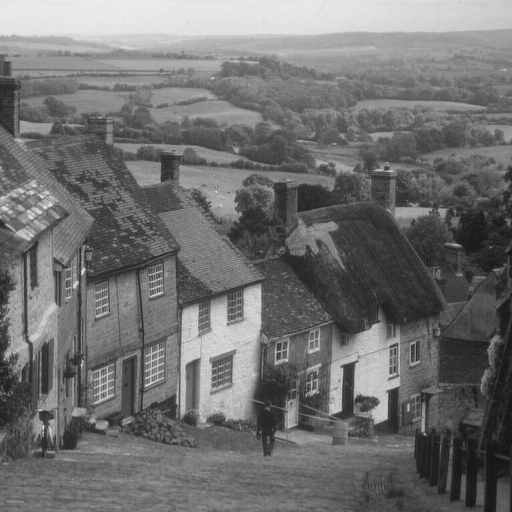}} \hspace{0.01mm}
  \subfloat{\includegraphics[width=0.105\linewidth]{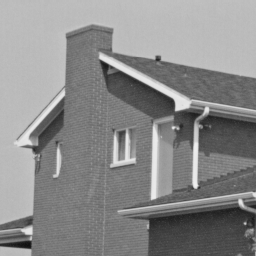}} \hspace{0.01mm}
  \subfloat{\includegraphics[width=0.105\linewidth]{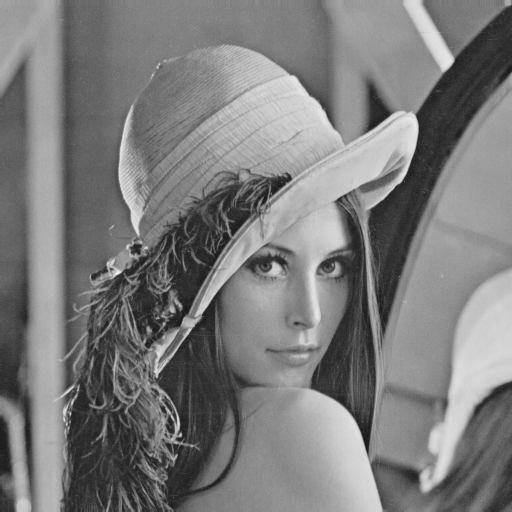}} \hspace{0.01mm}
  \subfloat{\includegraphics[width=0.105\linewidth]{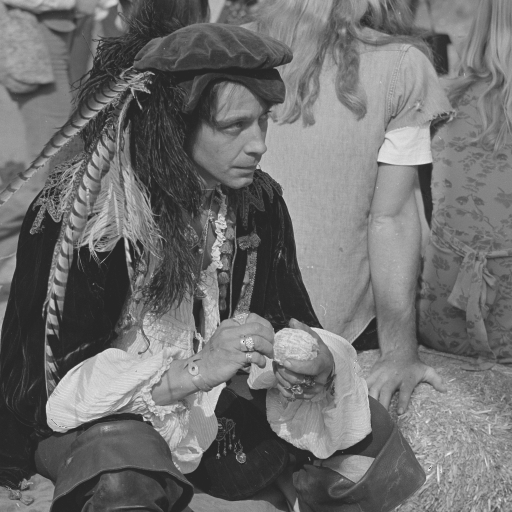}} 
   \caption{Grayscale input images used in the experiments in this paper.}
\label{Inputimages}
\end{figure*}

\begin{table*}[!htp]
\caption{Comparison of the PnP algorithms in \cite{nair2021fixed,gavaskar2021plug} and the proposed solver.}
\centering
\begin{tabular}{ | p{3.5cm}  | p{4.5cm} | p{4cm} | p{4.5cm} |}
\hline
\textbf{Property} & \textbf{PnP algorithms \cite{nair2021fixed,gavaskar2021plug}} & \textbf{Proposed} & \textbf{Comments} \\
\hline
convergence rate & sublinear convergence & superlinear convergence 
& superior convergence rate of our method  is empirically seen in Fig.~\ref{Empconv}. \\
\hline
cost per iteration & more for PnP-ADMM in most applications since we need to solve a linear system  & same as PnP-ISTA & total time taken to stabilize is less for our method as shown in Fig.~\ref{Empconvtime}.\\
\hline
flexibility of step size $\rho$ & $ \rho \geqslant 0.5\sigma_{\mathrm{max}}(\D^{-1/2}\F^\top\F\D^{-1/2})$  to guarantee convergence for PnP-ISTA & convergent for any $\rho>0$& as seen in Fig.~\ref{Empconv}, $\rho$ needs to be small to obtain state-of-the-art results. \\
\hline
existence of minimizer  of \eqref{pnpopt} & assumed & proved in Proposition \ref{minimizers} &  we can certify this for general linear inverse problems.\\
\hline
unique minimizer of \eqref{pnpopt} & not discussed & proved in Proposition \ref{uniqueness} &  uniqueness can be certified for deblurring, inpainting and superresolution.  \\
\hline
\end{tabular}
\label{compTIP21andourpaper}
\end{table*}

\begin{table*}[!htp]
\caption{Comparison of PSNR using different number of PnP iterations to obtain the guide image from the measurements.}
\centering
\begin{tabularx}{\textwidth}{ |c|*{6}{Y} || *{6}{Y|} }
\hline
&  \multicolumn{6}{c||}{Deblurring} & \multicolumn{6}{c|}{Inpainting} \\
\hline
\# PnP iterations & $0$ & $1$ & $2$ & $3$ & $4$ & $5$ & $0$ & $1$ & $2$ & $3$ & $4$ & $5$\\
\hline
\scriptsize \textit{barbara} & $\textbf{23.88}$ & $23.81$ & $23.88$ & $23.82$ & $23.81$ & $23.79$ & $24.16$ & $23.10$ & $23.58$ & $24.81$ & $26.11$ & $\textbf{27.19}$\\
\scriptsize \textit{boat} & $27.22$ & $27.36$ & $\textbf{27.46}$ & $27.30$ & $27.22$ & $27.11$ & $26.93$ & $27.47$ & $27.57$ & $27.70$ & $27.87$ & $\textbf{27.99}$ \\
\scriptsize \textit{cameraman} & $24.59$ & $24.67$ & $\textbf{24.78}$ & $24.68$ & $24.62$ & $24.54$ & $23.55$ & $24.29$ & $24.48$ & $24.64$ & $24.80$ & $\textbf{24.93}$ \\
\scriptsize \textit{couple} & $26.74$ & $26.92$ & $\textbf{27.00}$ & $26.95$ & $26.89$ & $26.82$ & $26.81$ & $27.09$ & $27.21$ & $27.45$ & $27.72$ & $\textbf{27.90}$ \\
\scriptsize \textit{fingerprint} & $\textbf{25.88}$  & $25.78$ & $25.27$ & $24.73$ & $24.36$ & $24.00$ & $23.52$ & $25.14$ & $25.36$ & $25.63$ & $25.94$ & $\textbf{26.20}$\\
\scriptsize \textit{hill} & $28.16$ & $28.15$ & $\textbf{28.27}$ & $28.15$ & $28.15$ & $28.13$ & $28.86$ & $28.90$ & $28.96$ & $29.06$ & $29.18$ & $\textbf{29.27}$\\
\scriptsize \textit{house} & $\textbf{29.96}$ & $29.48$ & $29.62$ & $29.37$ & $29.92$ & $29.14$ & $29.34$ & $30.48$ & $30.86$ & $31.32$ & $31.82$ & $\textbf{32.17}$\\
\scriptsize \textit{lena} & $30.32$ & $30.24$ & $\textbf{30.43}$ & $30.25$ & $30.19$ & $30.08$ & $30.35$ & $31.05$ & $31.23$ & $31.46$ & $31.67$ & $\textbf{31.84}$ \\
\scriptsize \textit{man} & $28.06$ & $28.06$ & $\textbf{28.17}$ & $28.05$ & $28.00$ & $27.91$ & $28.16$ & $28.39$ & $28.48$ & $28.62$ & $28.77$ & $\textbf{28.87}$\\
\hline
\end{tabularx}
\label{PnPiter1}
\end{table*}

\subsection{Comparison with existing PnP algorithms} 
\label{advoverpnp}

We remark that the optimization in \eqref{pnpopt} can be solved using either existing PnP algorithms \cite{nair2021fixed,gavaskar2021plug} or the proposed  linear solver  \eqref{linearsystem2}. Moreover, both are iterative in nature. An obvious question is what advantage does the latter offer? In this regard, we wish to discuss the following points.

\begin{itemize}
\item  For solving linear systems, Krylov solvers like GMRES, GCROT and LGMRES are known to be superlinearly convergent \cite{simoncini2005occurrence}.
On the other hand, iterative PnP algorithms such as PnP-ISTA, PnP-FISTA and PnP-ADMM are only sublinearly convergent \cite{beck2009fast,he20121}. Hence, in theory, Algorithm \ref{propalgo} with Krylov solvers  should require fewer iterations as compared to the PnP algorithms in \cite{nair2021fixed,gavaskar2021plug}.  We have experimented using Krylov solvers GMRES \cite{saad1986gmres}, LGMRES \cite{baker2005technique}, GCROT \cite{hicken2010simplified} and the quasi-Newton Broyden solver  \cite{broyden1965class,more1976global}. The evolution of  the objective function and PSNR for a deblurring experiment is shown in Fig.~\ref{Empconv} for  $\rho \in \{0.05,0.25\}$. Compared to PnP-ISTA, PnP-FISTA and PnP-ADMM algorithms  \cite{nair2021fixed,gavaskar2021plug}, Algorithm \eqref{propalgo} indeed converges much faster to a minimizer of $f+\rho \Phi_{\W}$, regardless of the linear solver used. Based on the convergence rate, we can order them as follows: GMRES ($12$s) $>$ GCROT ($2.5$s) $>$ LGMRES ($2.5$s) $>$ PnP-ISTA, PnP-FISTA, PnP-ADMM ($2.5$s) $>$ Broyden ($2.7$s), where the per-iteration cost is mentioned within brackets. We note that the per-iteration cost of GCROT and LGMRES is comparable to that of the PnP algorithms in \cite{nair2021fixed,gavaskar2021plug} but they converge much faster.  

\item  For our proposal, we are required to solve just one linear system. On the contrary, in every iteration of PnP-ADMM \eqref{pnpadmm-prox}, the following linear system needs to be solved as part of the $\x$ update (outer iterations):
\begin{equation*}
(\F^\top\F + \D)\x_{k+1} = \F^\top \y + \D(\v_k - \z_k).
\end{equation*}
This in turn requires an iterative solver  (inner iterations) for deblurring and superresolution. 

\item In Fig.~\ref{Empconv}, notice that PnP-ISTA and PnP-FISTA diverges when $\rho$ is $0.05$. This is possibly because this choice of $\rho$  violates the bound $\rho \geqslant 0.5 \ \sigma_{\mathrm{max}}(\D^{-1/2}\F^\top\F\D^{-1/2})$ that is used to guarantee convergence \cite{nair2021fixed,gavaskar2021plug}. On the other hand, our linear solver converges for any $\rho >0$. This is important since we often obtain better reconstructions for smaller $\rho$, e.g., PSNR of $27.2$ dB at $\rho=0.05$ compared to $26.35$ dB at $\rho=0.25$. 
\end{itemize} 

In summary, Algorithm \ref{propalgo} has several computational advantages over the PnP algorithms in \cite{nair2021fixed,gavaskar2021plug}, though they solve the same optimization problem. A detailed comparison is provided in Table \ref{compTIP21andourpaper}.

\subsection{Comparison of convergent linear solvers}
We consider four Krylov techniques for solving \eqref{linearsystem2}: GMRES, LGMRES, GCROT, and quasi-Newton Broyden. They come with convergence guarantees and can handle asymmetric systems. We have already shown in Section \ref{advoverpnp} that Algorithm \ref{propalgo} has a better convergence rate than PnP algorithms. Further, the evolution of  the objective function (and PSNR) for a deblurring experiment is shown in Figures \ref{Empconv} and \ref{Empconvtime}. Based on convergence rate (number of iterations), we can order them as follows: GMRES $>$ GCROT $>$ LGMRES $>$ Broyden, whereas based on the time taken to stabilize, the ordering is: GCROT $>$ LGMRES $>$ GMRES $>$ Broyden. We note that all four solvers solve the same optimization problem and hence stabilize to the same PSNR (objective) value. It is clear from the empirical analysis that GCROT offers an optimal tradeoff between convergence rate and cost per iteration.

\begin{figure*}[!htp]
\centering
  \subfloat[Ground truth]{\includegraphics[width=0.16\linewidth]{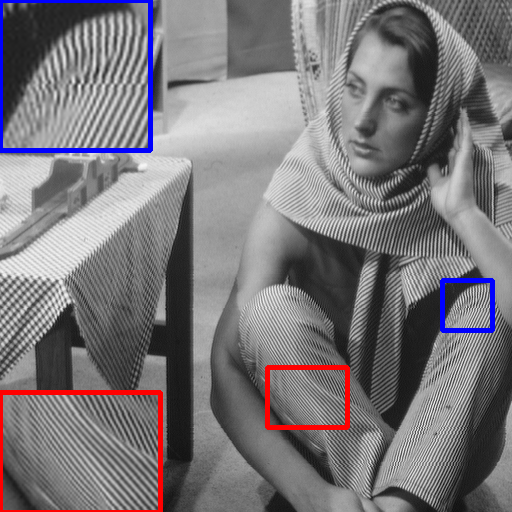}} \hspace{0.1mm}
  \subfloat[Observed]{\includegraphics[width=0.16\linewidth]{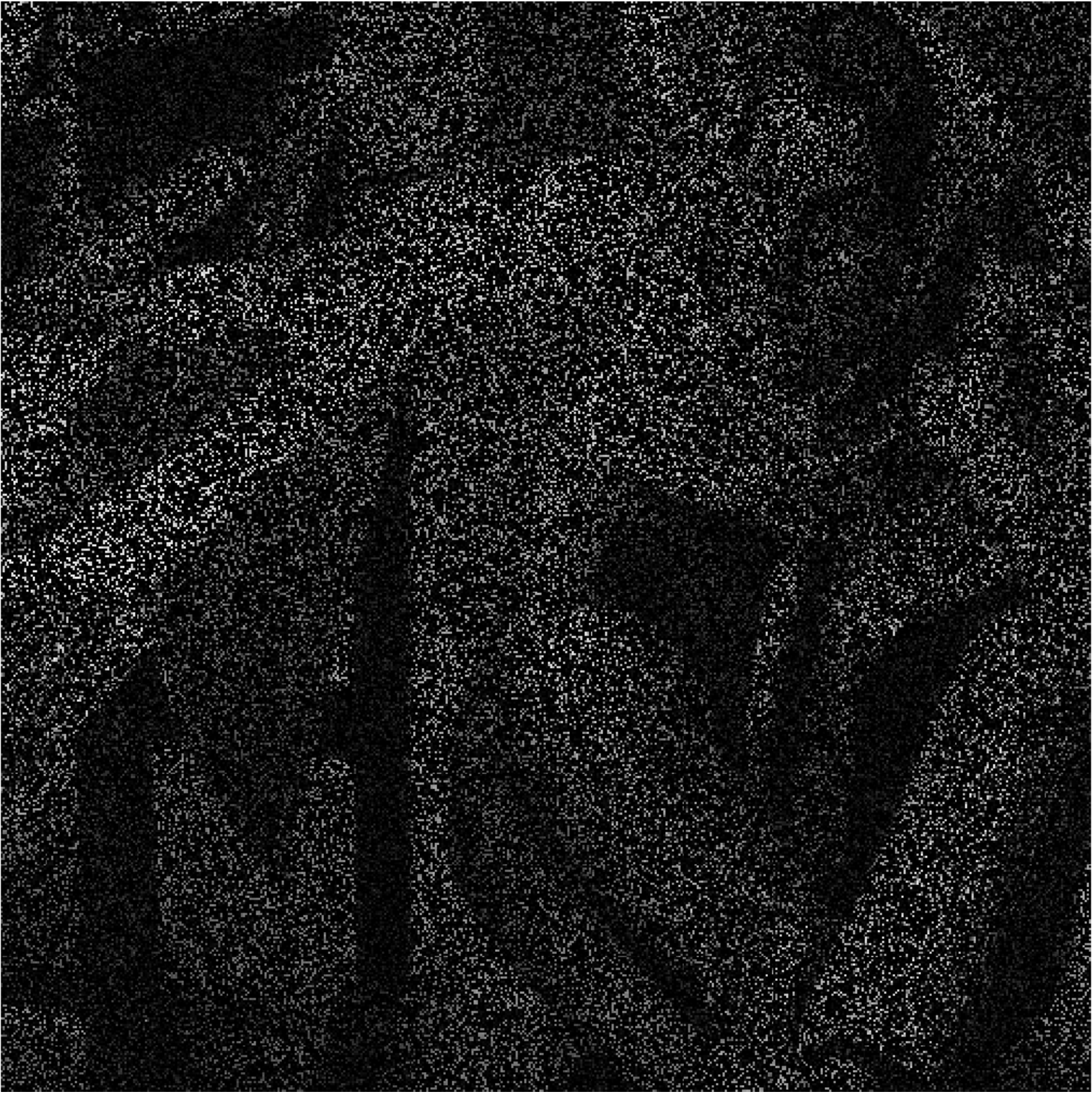}} \hspace{0.1mm}
  \subfloat[EPLL \cite{zoran2011learning}]{\includegraphics[width=0.16\linewidth]{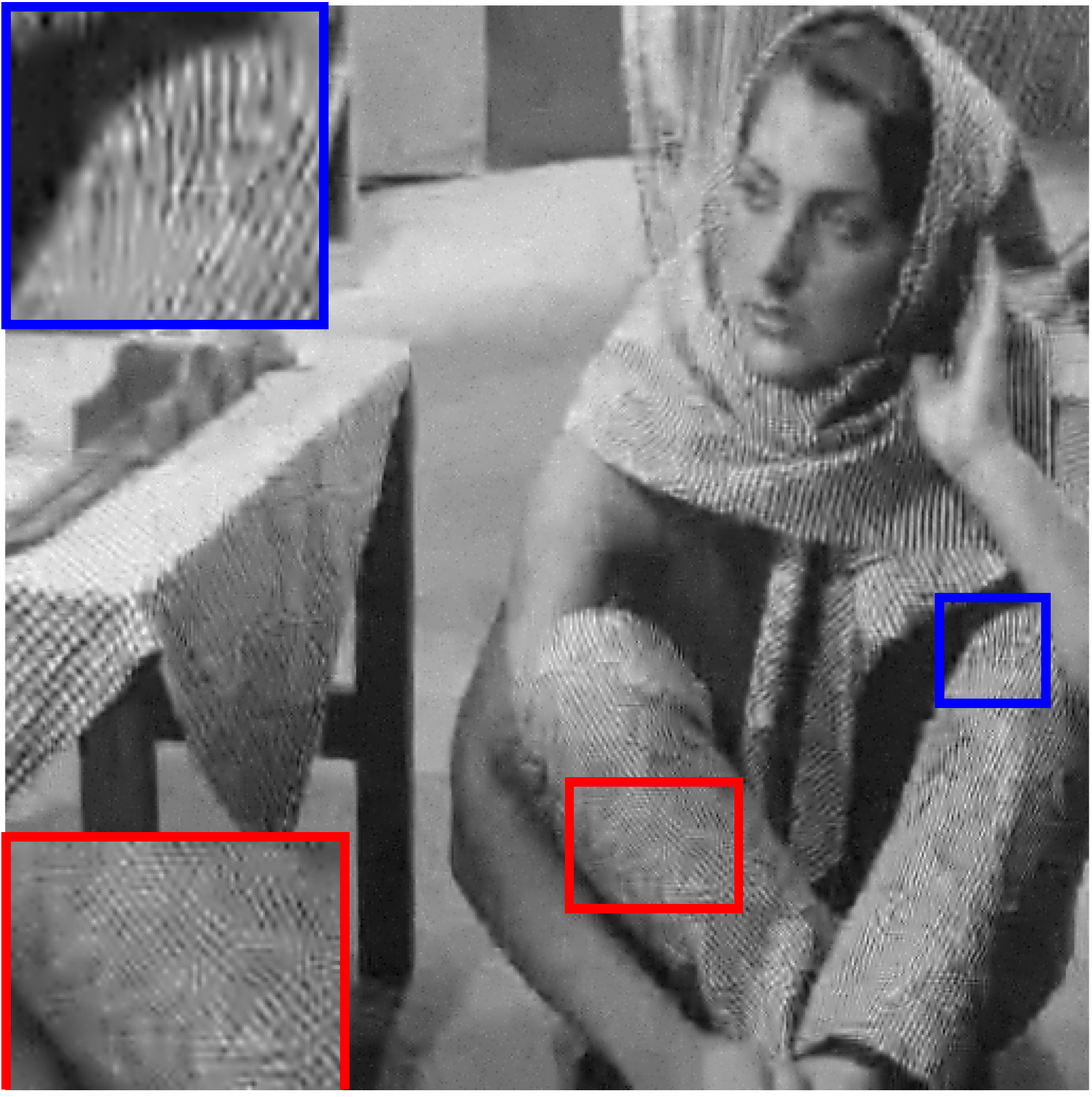}}
  \subfloat[IDBP-CNN \cite{Tirer2019_iter_denoising}]{\includegraphics[width=0.16\linewidth]{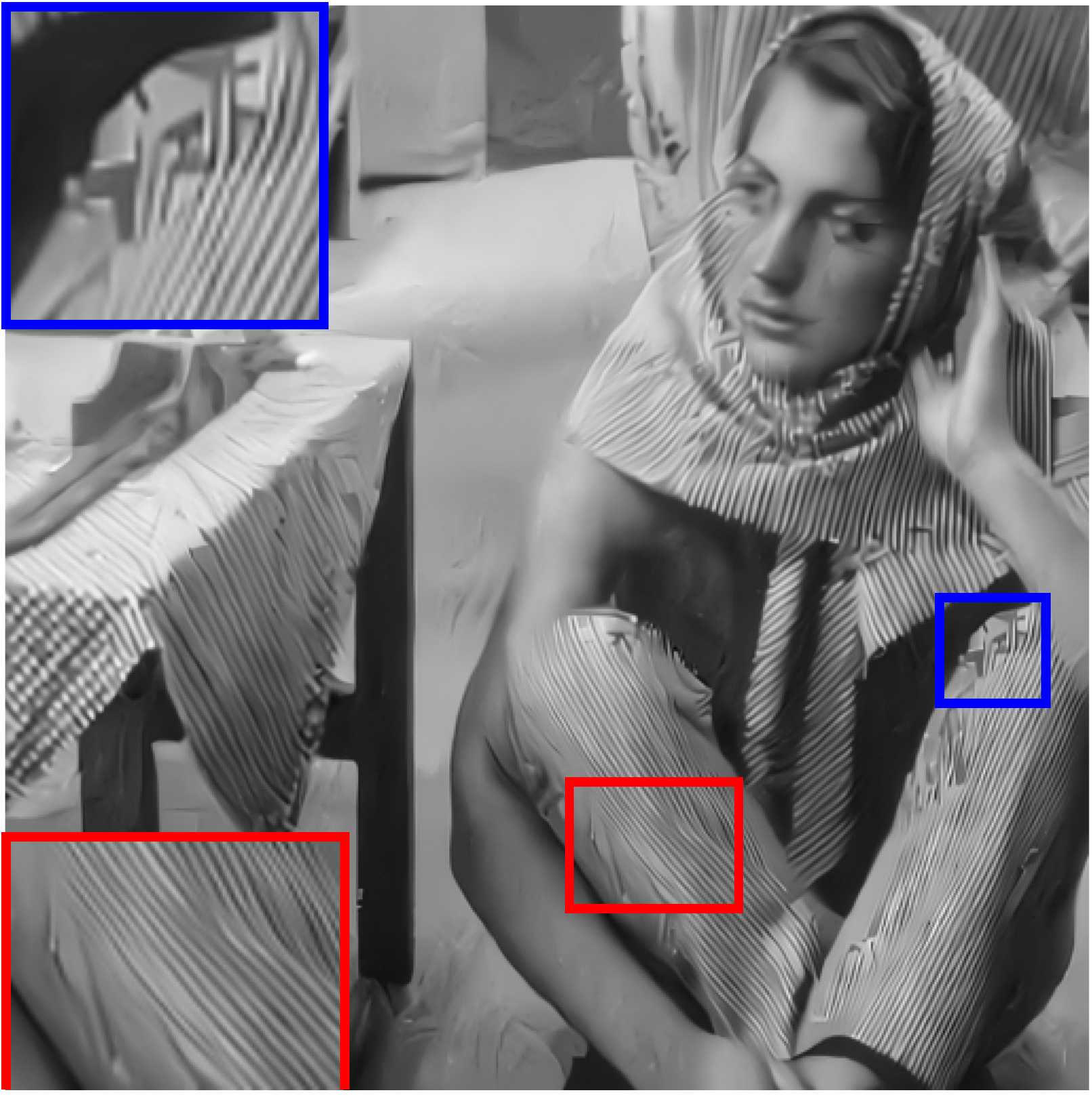}} \hspace{0.1mm} 
  \subfloat[PnP-CNN \cite{Ryu2019_PnP_trained_conv}]{\includegraphics[width=0.16\linewidth]{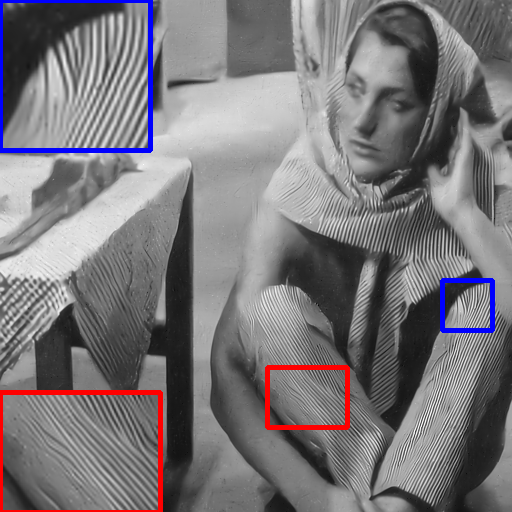}} \hspace{0.1mm}
  \subfloat[Proposed.]{\includegraphics[width=0.16\linewidth]{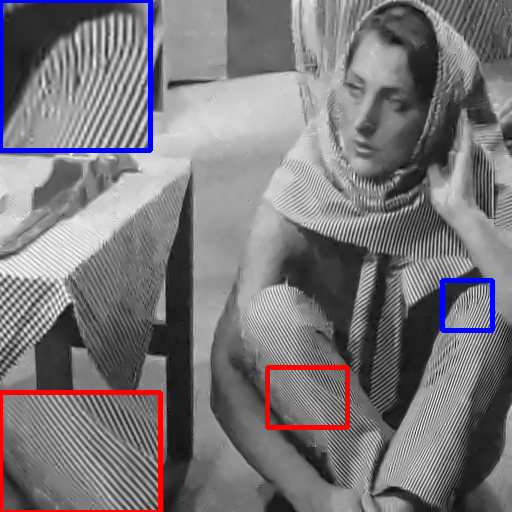}}
\caption{Image inpainting from just $20\%$ pixels at $\sigma=10/255$ noise level. The proposed method ($\rho = 0.05$) is compared with state-of-the-art methods (c)--(e). PSNR and SSIM values: (c) $24.87$ dB,  $0.729$; (d) $26.12$ dB, $0.806$; (e) $26.99$ dB, $0.812$; and (f)  $\textbf{27.67}$ dB, $\textbf{0.840}$. Comparison of the zoomed regions shows the superior reconstruction capability of our method in textured regions. }
\label{inpaintfig}
\end{figure*}

\begin{table*}[!htp]
\caption{Experiment 1: Comparison of timing, PSNR and SSIM for Inpainting with $80$ percent missing pixels (Noiseless case).}
\centering
\resizebox{\textwidth}{!}{%
\begin{tabularx}{\textwidth}{ |c|| c|| *{10}{Y|} }
\hline
Method & \scriptsize \textbf{time (s)} & \scriptsize \textit{barbara} & \scriptsize \textit{boat} & \scriptsize \textit{cameraman}  & \scriptsize \textit{couple} & \scriptsize \textit{fingerprint} &  \scriptsize \textit{hill} & \scriptsize \textit{house}  & \scriptsize \textit{lena} & \scriptsize \textit{man}\\
\hline
\multirow{ 2}{*}{EPLL \cite{zoran2011learning}} & \multirow{ 2}{*}{$258.13$} & $25.43$  $0.792$ & $27.51$  $0.788$ & $24.31$  $0.816$ & $27.69$  $0.801$ & $24.20$ $0.845$ &  $29.06$  $0.776$ & $30.71$  $0.867$  & $30,42$  $0.867$ & $27.94$ $0.795$\\
\hline
\multirow{ 2}{*}{IRCNN\cite{zhang2017learning}} & \multirow{ 2}{*}{$31.06$} & $27.34$  $0.858$ & $27.88$  $0.809$ & $25.27$  $0.838$ & $28.23$  $0.834$ & $25.97$ $0.887$ & $29.24$  $0.804$ & $ 32.21$  $0.888$ & $31.56$  $0.889$  & $28.60$ $\textbf{0.830}$ \\
\hline
\multirow{ 2}{*}{IDBP-BM3D \cite{Tirer2019_iter_denoising}} & \multirow{ 2}{*}{$45.5$} & $25.55$ $0.841$ & $28.51$  $\textbf{0.824}$ & $24.86$  $0.840$ & $\textbf{28.80}$  $0.846$ & $25.09$ $0.878$ &  $29.74$  $0.810$ & $\textbf{33.78}$  $\textbf{0.893}$ & $32.13$  $0.893$  & $28.65$ $0.824$
\\
\hline
\multirow{ 2}{*}{IDBP-CNN \cite{Tirer2019_iter_denoising}} & \multirow{ 2}{*}{$39.77$} & $24.29$  $0.796$ & $27.72$  $0.803$ & $24.24$  $0.826$ & $27.98$  $0.818$ & $25.59$ $0.865$ &  $29.01$  $0.790$ & $32.14$  $0.881$ & $31.22$  $0.880$ & $28.58$ $\textbf{0.830}$\\
\hline
\multirow{ 2}{*}{Proposed - Yaroslavsky \cite{yaroslavsky1985digital}} & \multirow{ 2}{*}{$\textbf{26.75}$} &$29.14$  $0.884$ &  $27.88$  $0.767$ & $25.10$  $0.816$ & $28.07$  $0.792$ & $26.68$ $0.913$ &  $28.96$  $0.758$ & $32.12$  $0.850$ & $31.25$  $0.853$ & $28.32$ $0.719$ \\
\hline
\multirow{ 2}{*}{Proposed - Bilateral \cite{tomasi1998bilateral}} & \multirow{ 2}{*}{$\textbf{26.75}$} &$29.37$  $0.896$ &  $28.24$  $0.817$ & $24.85$  $0.842$ & $28.56$  $0.840$ & $26.34$ $0.902$ &  $29.53$  $0.823$ & $32.83$  $0.829$ & $31.98$  $0.894$ & $28.80$ $0.819$ \\
\hline
\multirow{ 2}{*}{Proposed - NLM (Laplacian)} & \multirow{ 2}{*}{$\textbf{28.15}$} &$29.32$  $0.890$ &  $28.28$  $0.821$ & $25.07$  $0.828$ & $28.64$  $0.845$ & $26.29$ $0.900$ &  $29.75$  $0.823$ & $33.43$  $0.895$ & $32.06$  $0.894$ & $28.82$ $0.829$ \\
\hline
\multirow{ 2}{*}{Proposed - NLM (Gaussian)} & \multirow{ 2}{*}{$\textbf{28.15}$} &$\textbf{29.52}$  $\textbf{0.900}$ &  $\textbf{28.56}$  $\textbf{0.824}$ & $\textbf{25.71}$  $\textbf{0.848}$ & $28.72$  $\textbf{0.849}$ & $\textbf{26.45}$ $\textbf{0.908}$ &  $\textbf{29.78}$  $\textbf{0.825}$ & $33.08$  $0.891$ & $\textbf{32.42}$  $\textbf{0.896}$ & $\textbf{28.90}$ $0.829$ \\
\hline
\end{tabularx}}
\label{Inpaintab1}
\end{table*}

\subsection{Choice of guide image}
In Section \ref{propsolver}, the guide image $\u$ is derived from the measurements $\y$ using $5$ to $10$ PnP iterations, as done in \cite{sreehari2016plug,Teodoro2019PnPfusion,nair2021fixed,gavaskar2021plug}.
Note that the final reconstruction depends on the regularizer $\phi_{\W}$ in \eqref{regularizer} and hence $\W$, and $\W$ in turn depends on the choice of guide. Intuitively, we should expect better reconstructions if the guide image resembles the ground truth, i.e., if we use more PnP iterations to compute the guide from the measurements---this is indeed the case with the inpainting results in Table \ref{PnPiter1}. However, this is not true in general. We empirically found that the dependence of the number of PnP iterations (used to compute the guide) on the reconstruction is complicated and depends on the image and application at hand. For example, from the deblurring results in Table \ref{PnPiter1}, we see that one or two iterations seem to give the best results. On the other hand,  for the inpainting results in Table \ref{PnPiter1}, the reconstruction is seen to  improve with the number of PnP iterations. We wish to investigate this aspect more thoroughly in future work.

\section{Experimental Results}
\label{exp}

To understand the reconstruction capability of the proposed algorithm in relation to state-of-the-art methods, we apply our algorithm to three different inverse problems---inpainting, superresolution, and deblurring. In step \eqref{oursolver} of Algorithm \ref{propalgo}, we use the GCROT solver; we use just $5$ iterations which takes about $10$ seconds for a $512 \times 512$ image. All experiments were performed on a $2.3$ GHz, $36$ core machine (no GPUs are used). The input images in Fig.~\ref{Inputimages} are used for comparisons. Timings are reported to highlight the speedup obtained using our algorithm. 


\begin{table*}[!htp]
\caption{Experiment 2: Comparison of timing, PSNR and SSIM for Inpainting with $80$ percent missing pixels (noise level $0.04$ ).}
\centering
\resizebox{\textwidth}{!}{%
\begin{tabularx}{\textwidth}{ |c|| c|| *{10}{Y|} }
\hline
Method &  \scriptsize \textbf{time(s)} & \scriptsize \textit{barbara} & \scriptsize\textit{boat} & \scriptsize \textit{cameraman} & \scriptsize \textit{couple} & \scriptsize \textit{fingerprint} &  \scriptsize \textit{hill} &  \scriptsize \textit{house} & \scriptsize \textit{lena} & \scriptsize \textit{man}\\
\hline
\multirow{ 2}{*}{EPLL \cite{zoran2011learning}} & \multirow{ 2}{*}{$257.30$} & $24.76$  & $26.59$   & $23.88$   & $26.70$  & $23.31$ &  $27.86$  & $29.10$  & $28.95$  & $26.99$\\
& &  $0.720$ & $0.720$ & $0.727$ & $0.733$ & $0.812$ & $0.710$ & $0.773$ & $0.777$ & $0.724$ \\
\hline
\multirow{ 2}{*}{IRCNN \cite{zhang2017learning}} & \multirow{ 2}{*}{$31.10$} & $25.94$  & $26.86$ & $\textbf{24.75}$ & $26.98$ & $24.71$ & $27.90$ & $30.61$ & $29.94$ &  $27.01$\\
& &  $0.781$ & $0.741$ &  $0.794$ &  $0.757$ &  $0.834$ &  $0.726$ & $0.845$ &  $0.833$ &  $0.725$ \\
\hline
\multirow{ 2}{*}{IDBP-BM3D \cite{Tirer2019_iter_denoising}} & \multirow{ 2}{*}{$39.97$} & $25.03$   & $27.02$ & $24.68$ &  $\textbf{27.22}$ & $24.99$ & $28.00$& $\textbf{31.62}$  & $30.14$  & $27.24$ \\
& & $0.755$ &  $0.731$ &  $0.786$ &  $0.759$ &  $0.836$ & $0.708$ & $0.850$ & $0.835$ & $0.727$ \\
\hline
\multirow{ 2}{*}{IDBP-CNN \cite{Tirer2019_iter_denoising}} & \multirow{ 2}{*}{$63.10$} & $26.12$  & $26.95$   & $23.94$  & $27.04$  & $24.50$ &  $27.93$   & $31.16$ & $\textbf{30.17}$  & $27.21$ \\
& &  $0.753$ & $\textbf{0.756}$ & $0.791$  & $0.773$ & $0.857$ & $\textbf{0.734}$  & $\textbf{0.851}$  & $\textbf{0.849}$ & $\textbf{0.735}$ \\
\hline
\multirow{ 2}{*}{Proposed} & \multirow{ 2}{*}{$\textbf{28.30}$} & $\textbf{27.67}$ & $\textbf{27.16}$ & $24.64$ & $27.08$ & $\textbf{25.92}$  & $\textbf{28.04}$ & $30.70$  & $30.10$ & $\textbf{27.39}$\\
& & $\textbf{0.791}$ & $0.746$ & $\textbf{0.795}$ & $\textbf{0.761}$ & $\textbf{0.886}$ & $0.732$ &  $0.835$ & $0.833$ & $0.717$ \\
\hline
\end{tabularx}}
\label{Inpaintab2}
\end{table*}

\begin{figure*}
\centering
  \subfloat[Ground truth]{\includegraphics[width=0.16\linewidth]{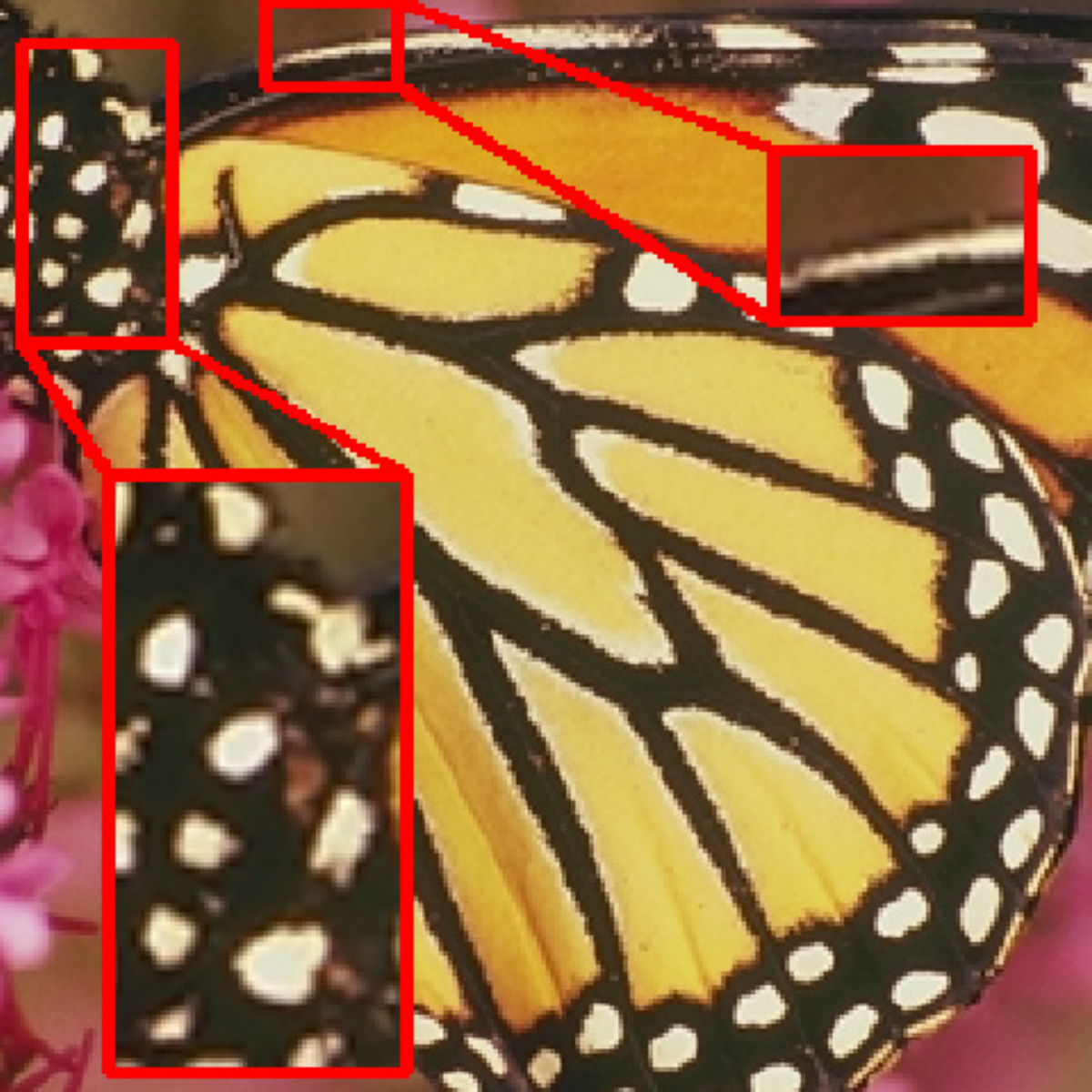}} \hspace{0.05mm}
  \subfloat[Bicubic \cite{chambolle2004algorithm}]{\includegraphics[width=0.16\linewidth]{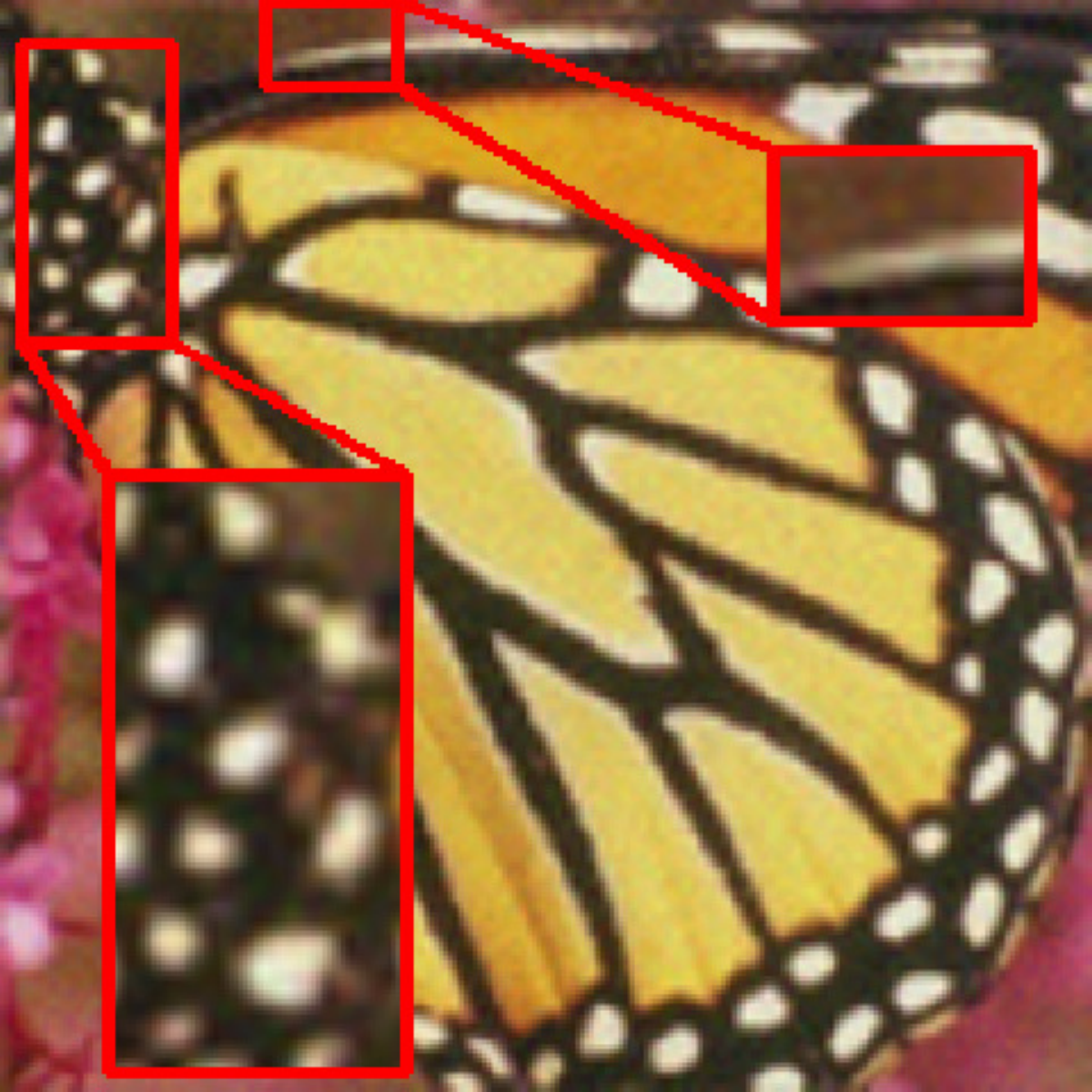}} \hspace{0.05mm}
  \subfloat[TV \cite{chambolle2004algorithm}]{\includegraphics[width=0.16\linewidth]{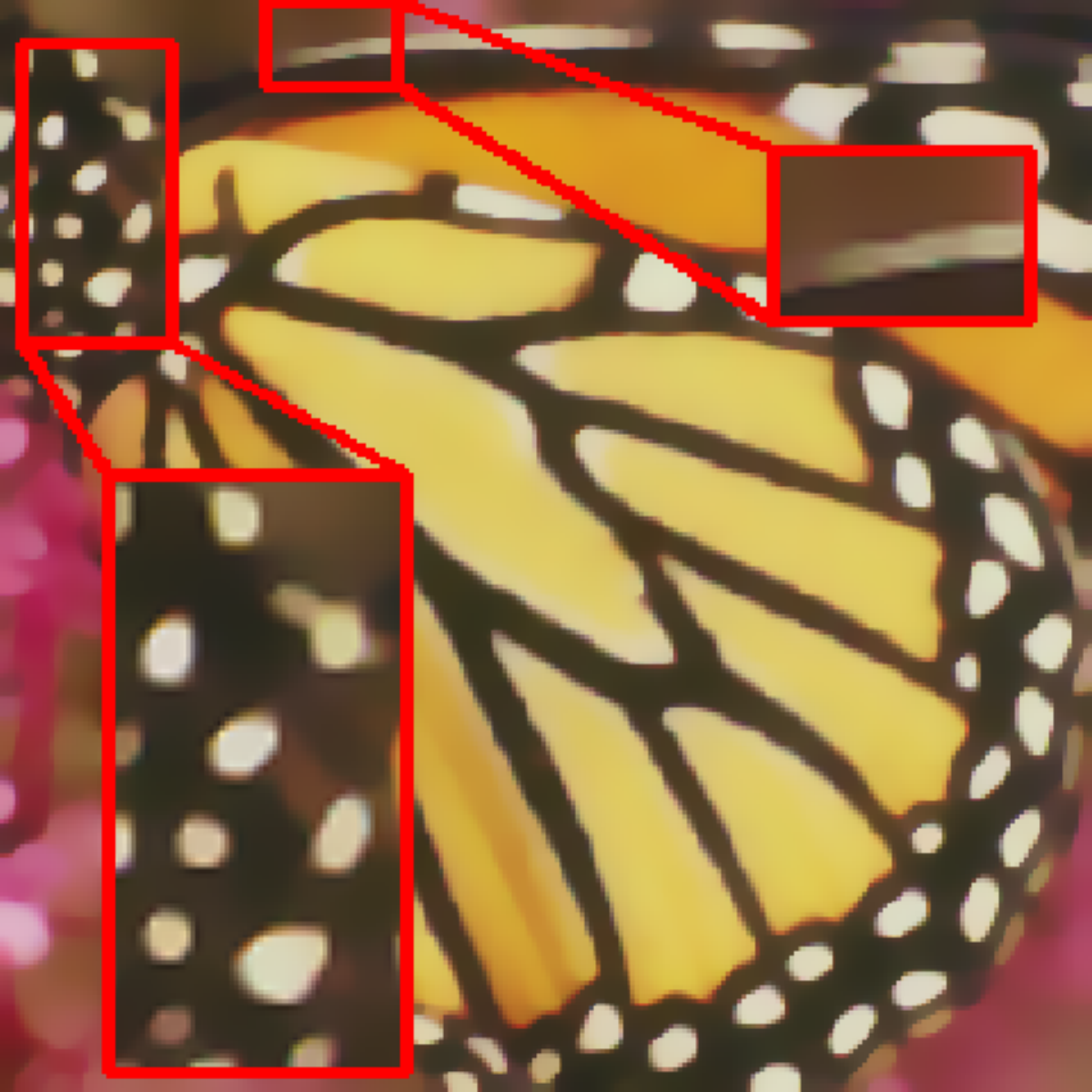}}  \hspace{0.05mm}  
  \subfloat[BM3D \cite{dabov2007image}]{\includegraphics[width=0.16\linewidth]{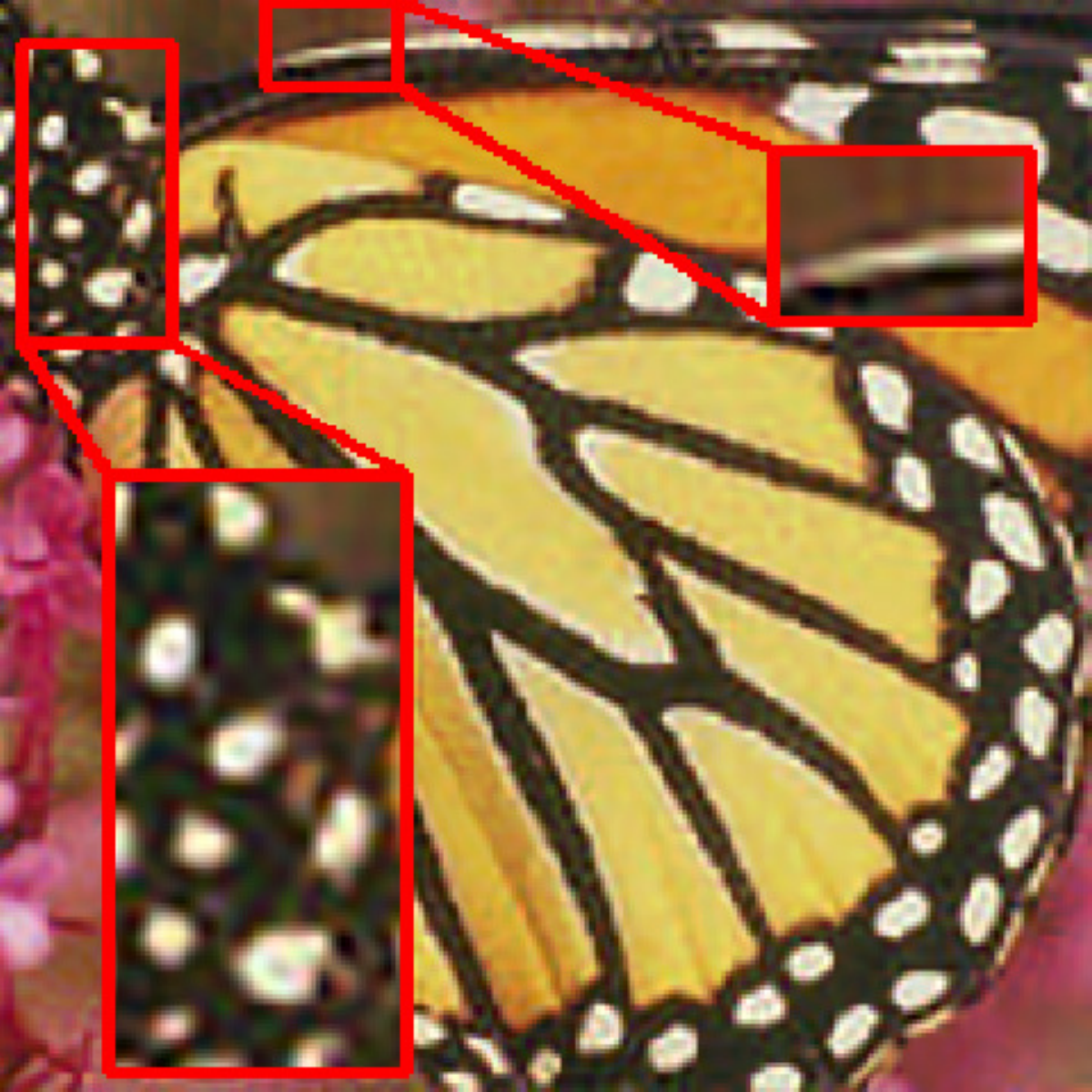}} \hspace{0.05mm}
  \subfloat[DnCNN \cite{Ryu2019_PnP_trained_conv}]{
	   \includegraphics[width=0.16\linewidth]{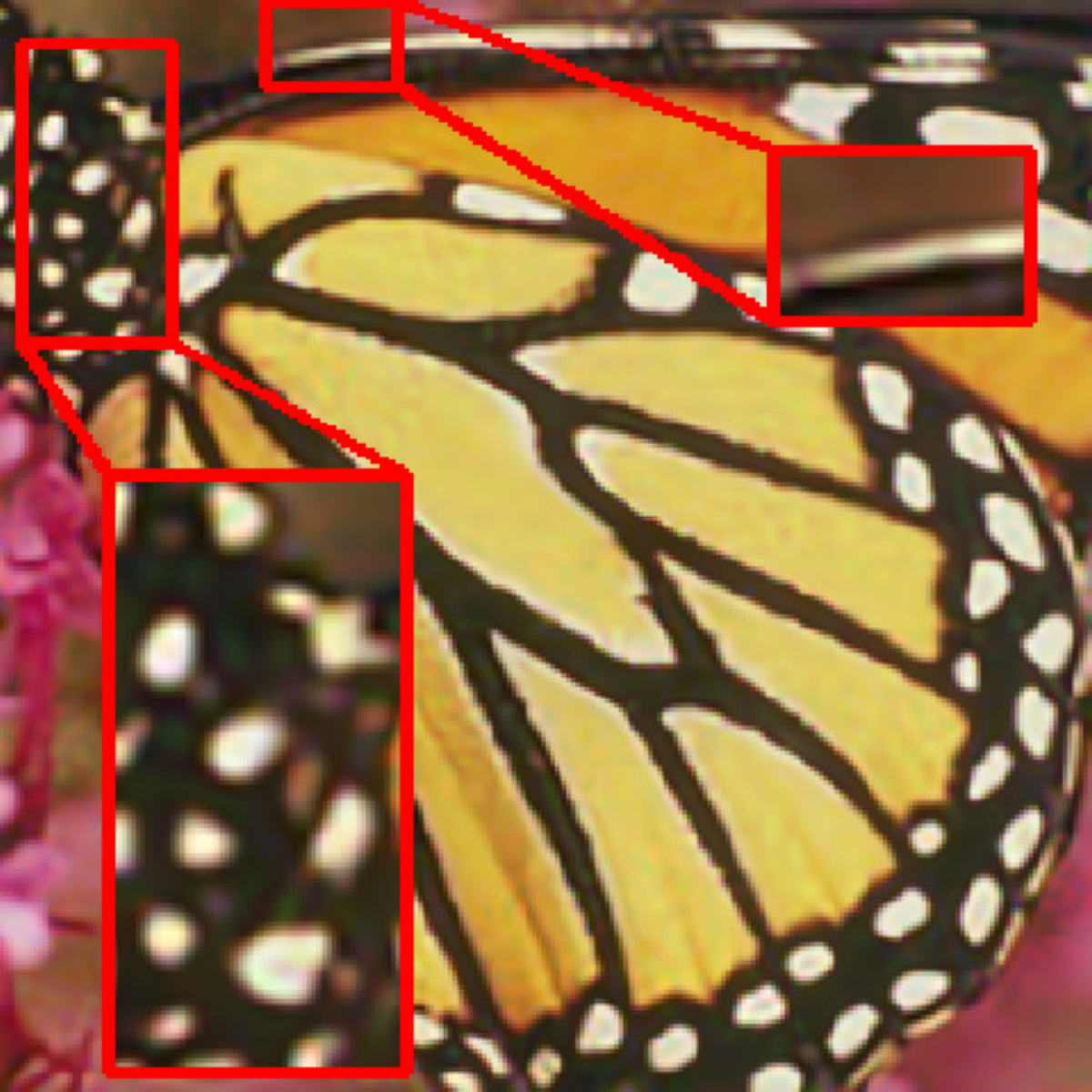}} \hspace{0.05mm}
  \subfloat[Proposed ]{\includegraphics[width=0.16\linewidth]{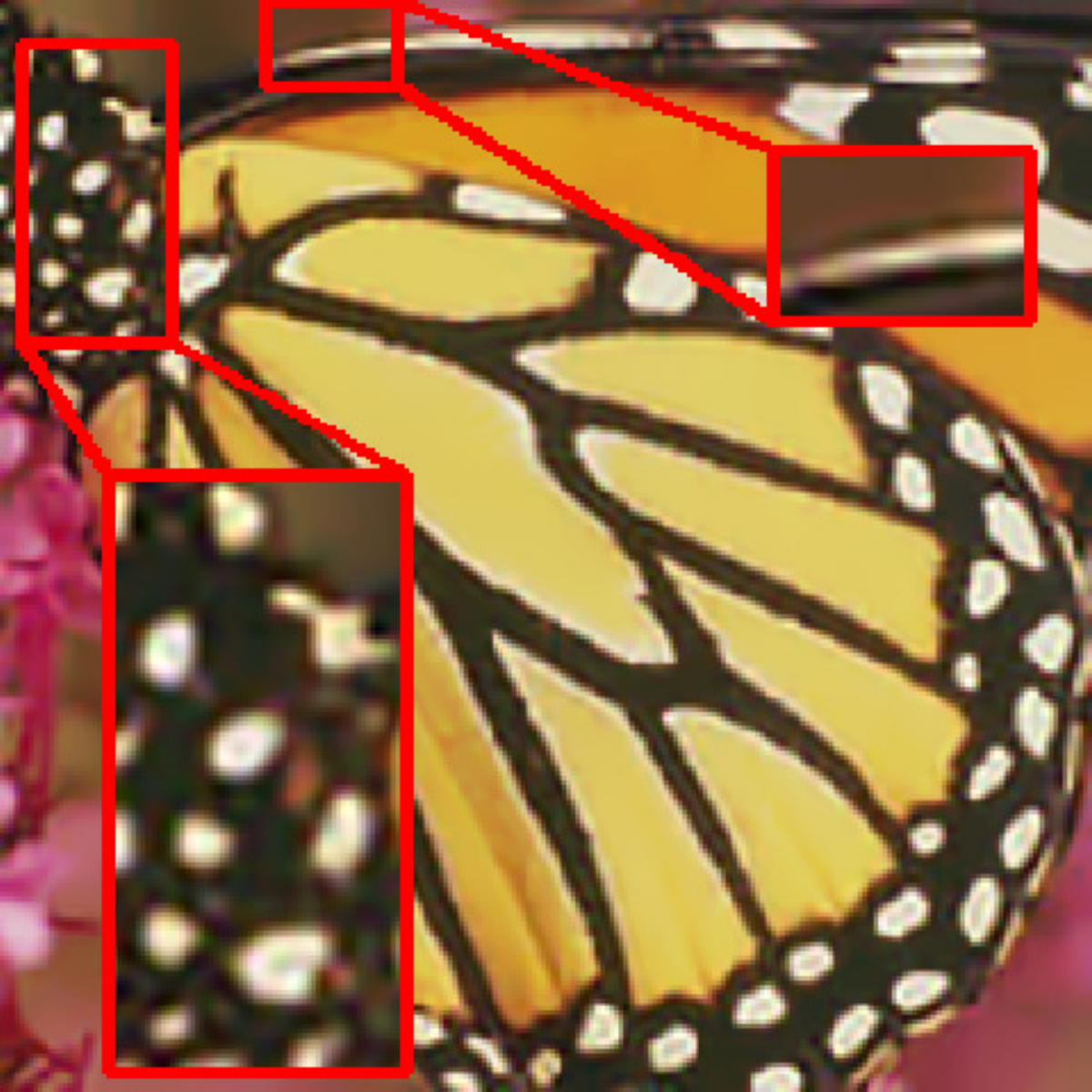}}
\caption{Results for $2\times$ image superresolution (using $9 \times 9$ Gaussian blur with standard deviation $1$ and  noise level $5/255$). Compared to BM3D (see zoomed regions), the quality is perceptibly better for PnP regularization, both for DnCNN and the proposed method ($\rho = 2$). PSNR(dB), SSIM values: (b) $22.57$, $0.78$ (c) $24.3$, $0.83$ (d) $26.2$, $0.86$ (e) $\textbf{27.4}$, $0.89$; and (f) $27.03$, $\textbf{0.90}$.}
\label{Superresfig}
\end{figure*}

\begin{table*}[!htp]
\caption{Comparison of timing and PNSR for superresolution ($9 \times 9$ Gaussian blur with standard deviation $1$, and  noise level $5/255$).}
\centering
\resizebox{\textwidth}{!}{
\begin{tabularx}{\textwidth}{ |c|| c || *{10}{Y|} }
\hline
Methods & \scriptsize \textbf{time (s)}& \scriptsize \textit{barbara} & \scriptsize \textit{boat} & \scriptsize \textit{cameraman} & \scriptsize \textit{couple} & \scriptsize \textit{fingerprint} & \scriptsize \textit{hill} & \scriptsize \textit{house} & \scriptsize \textit{lena} & \scriptsize \textit{man} \\
\hline
 \multicolumn{11}{c}{$K=2$} \\
\hline
SR\cite{yang2008image} & $298.6$ & $23.61$ & $26.25$ & $23.71$ & $26.15$ & $23.80$ & $27.41$ & $27.71$  & $28.07$ & $26.99$ \\
GPR\cite{he2011single} & $390.12$ & $23.82$ & $26.81$ & $23.91$ & $26.63$ & $24.05$ & $28.38$ & $29.16$ & $29.54$ & $27.78$  \\
NCSR\cite{dong2012nonlocally} & $256.16$ & $24.67$ & $28.40$ & $26.22$ & $28.02$ & $27.74$ & $28.50$ & $29.85$ & $30.43$ & $28.75$  \\
PnP-BM3D \cite{CWE2017} & $48.65$ & $24.64$ & $29.41$ & $26.73$ & $29.22$ & $28.82$ & $29.82$ & $32.65$ & $32.76$ &  $29.66$  \\
PnP-DnCNN \cite{Ryu2019_PnP_trained_conv} & $39.45$ & $24.50$ & $\textbf{29.57}$ & $\textbf{27.36}$ & $29.29$ & $29.06$ & $29.97$ & $32.12$ & $32.58$ & $\textbf{29.98}$ \\
Proposed & $\textbf{28.12}$ & $\textbf{24.96}$ & $\textbf{29.57}$ & $26.74$ & $\textbf{29.38}$ & $\textbf{29.28}$ & $\textbf{30.07}$ & $\textbf{32.68}$ & $\textbf{32.83}$ & $\textbf{29.98}$  \\
\hline
\multicolumn{11}{c}{$K=4$} \\
\hline
SR\cite{yang2008image} & $298.6$ & $20.67$ & $21.30$ & $18.86$ & $21.51$ & $16.37$ & $23.15$ & $22.19$  & $22.85$ & $22.26$  \\
GPR\cite{he2011single} & $390.12$ & $21.55$ & $22.68$ & $19.90$ & $22.77$ & $17.70$ & $24.57$ & $23.51$ & $24.37$ & $23.63$ \\
NCSR\cite{dong2012nonlocally} & $256.16$ & $22.86$ & $24.38$ & $22.04$ & $24.18$ & $22.31$ & $25.01$ & $26.30$ & $26.90$ & $25.41$  \\
PnP-BM3D \cite{CWE2017} & $48.65$ & $23.62$ & $\textbf{25.75}$ & $23.06$ & $\textbf{25.30}$ & $23.48$ & $27.17$ & $\textbf{29.14}$ & $29.42$ & $26.86$\\
PnP-DnCNN \cite{Ryu2019_PnP_trained_conv} & $39.45$ & $23.66$ & $25.73$ & $\textbf{23.39}$ & $25.27$ & $23.61$ & $27.19$ & $29.05$ & $29.44$ & $26.92$  \\
Proposed & $\textbf{28.12}$ & $\textbf{23.67}$ & $25.64$ & $23.05$ & $\textbf{25.30}$ & $\textbf{23.65}$ & $\textbf{27.20}$ & $28.46$ & $\textbf{29.45}$ & $\textbf{26.93}$  \\
\hline
\end{tabularx}
}
\label{Superrestab}
\end{table*}

\begin{table}[!htp]
\caption{PSNR and SSIM on BSDS300 dataset for the inpainting experiments in Tables \ref{Inpaintab1} and \ref{Inpaintab2}.}
\centering
\resizebox{0.48\textwidth}{!}{%
\begin{tabular}{ |c|| c | c| c| }
\hline
Methods &  PnP-BM3D \cite{CWE2017} & PnP-DnCNN \cite{Ryu2019_PnP_trained_conv} & Proposed\\
\hline
\multirow{ 2}{*}{Experiment 1} & $27.01$ & $27.14$ & $\textbf{27.15}$\\
&  $0.8232$ & $\textbf{0.8265}$ & $0.8242$ \\
\hline
\multirow{ 2}{*}{Experiment 2} & $25.59$  & $\textbf{26.05}$  & $26.00$\\
& $0.7485$ &  $\textbf{0.7584}$ & $0.7493$\\
\hline
\end{tabular}}
\label{Inpaintab300}
\end{table}

\subsection{Inpainting} 

The problem in inpainting is to estimate missing pixels with known locations in an image; see \cite{Tirer2019_iter_denoising} for the forward operator $\F$  in this case. A sample real-world application for removing text from an image was already shown in Fig.~\ref{inpaintreal}. Further, in Tables \ref{Inpaintab1} and \ref{Inpaintab2}, we present an extensive comparison with state-of-the-art methods for image inpainting using just 20$\%$ pixels. Note that we are competitive with CNN-based methods \cite{zhang2017learning} and \cite{Tirer2019_iter_denoising}, though we take less time. We note that solving the  linear system in step $4$ of our algorithm takes about $10$ seconds and the  remaining time is taken by step $1$ to construct the guide image. For the result in Fig.~\ref{inpaintfig}, we see that our method can preserve thin edges which the compared methods are unable to do. In Table \ref{Inpaintab300}, we shown that our algorithm is competitive with top performing methods on the BSDS300 dataset comprising of  over $300$ images \cite{MartinFTM01}.

At this point, we wish to justify our choice of NLM as the preferred kernel denoiser. In Table \ref{Inpaintab1}, we have compared the results using different kernel denoisers: bilateral, Yaroslavsky, standard NLM (Gaussian kernel) and NLM (Laplacian kernel). Since we consistently get the best results for standard NLM,  we have used this denoiser for all experiments.
\subsection{Superresolution}

A widely used model for image superresolution is $\F = \mathbf{S}\B$, where $\B \in \Re^{n \times n}$ is a blurring operator, $\mathbf{S} \in \Re^{m \times n}$ is a subsampling operator, and $m=n/K$ where $K \geqslant 1$\cite{CWE2017,ng2010solving}. In Table~\ref{Superrestab}, we compare with state-of-the art methods for $K=2$ and $4$, and using a Gaussian blur. Similar to inpainting,  we are able to compete with state-of-the-art algorithms, while being faster. In Fig.~\ref{Superresfig}, we compare our  method with PnP-ADMM by plugging different denoisers. Notice that our method and PnP regularization using DnCNN \cite{Ryu2019_PnP_trained_conv}  gives the best result. Unlike BM3D and TV, we are able to restore even fine features and do not over smooth the image. 

\subsection{Deblurring}

The forward operator $\F$ is a blur in this case. We show a deblurring application in Fig.~\ref{deblurfig} where the blur kernel is asymmetric. Our method does not  produce ringing artefacts as seen in a state-of-the-art method \cite{zhang2017learning}. In  Table~\ref{Deblurtab}, we perform three experiments as in \cite{schuler2013machine} for different Gaussian blur and noise standard deviations. The results are averaged over the images in Fig.~\ref{Inputimages}. 
\begin{enumerate}
\item[(1)] Gaussian blur $25 \times 25$ and standard deviation $1.6$, and noise $\sigma=0.04$. 
\item[(2)] Gaussian blur $25 \times 25$ and standard deviation $1.6$, and noise $\sigma=2/255$.
\item[(3)] Gaussian blur $25 \times 25$ and standard deviation $3$, and noise $\sigma=0.04$. 
\end{enumerate}

It is seen from Table~\ref{Deblurtab} that the proposed method is competitive with state-of-the-art deblurring methods \cite{dong2012nonlocally, zhang2017learning}. In Table \ref{Deblurtab300}, we show that our method is generally competitive and can sometimes outperform the top methods \cite{dong2012nonlocally,zhang2017learning} on the BSDS300 dataset. 

\subsection{Discussion}

It might seem surprising  that NLM  is able to compete with a pretrained deep denoiser for image regularization; after all, the denoising quality of DnCNN is generally a few dBs better than NLM. In this regard, we note that  the exact relation between denoising capacity and the final restoration quality (within the PnP framework) is not well understood. For example, although DnCNN is more powerful than BM3D \cite{zhang2017beyond}, it is known that plugging BM3D denoiser within a PnP algorithm can produce better results  than DnCNN \cite{Ryu2019_PnP_trained_conv,teodoro2019image}. Similarly, NLM has been shown to outperform BM3D (which is more powerful than NLM) for some applications  \cite{sreehari2016plug,sreehari2015rotationally}. 

\begin{figure*}
\centering
  \subfloat[Measurements]{\includegraphics[width=0.32\linewidth]{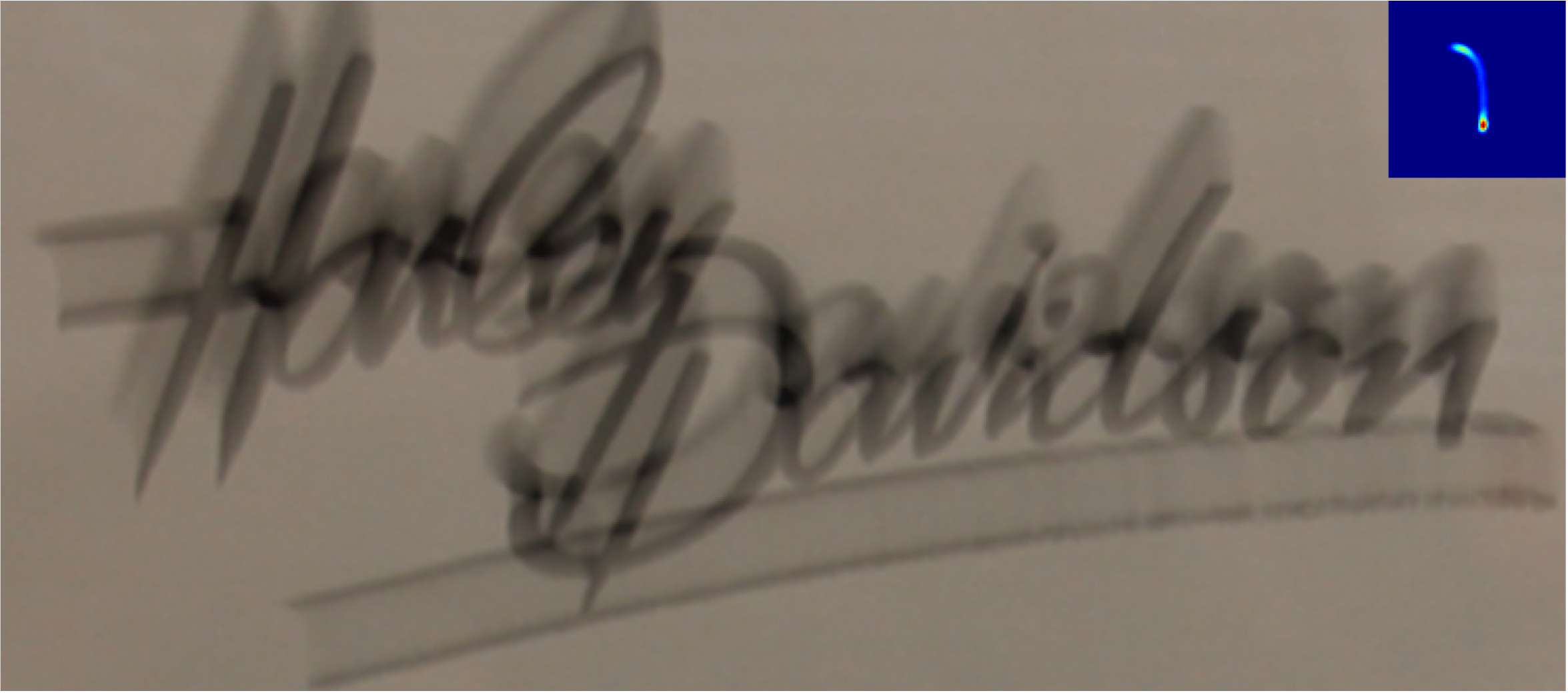}} \hspace{0.1mm}
   \subfloat[IRCNN \cite{zhang2017learning}]{\includegraphics[width=0.32\linewidth]{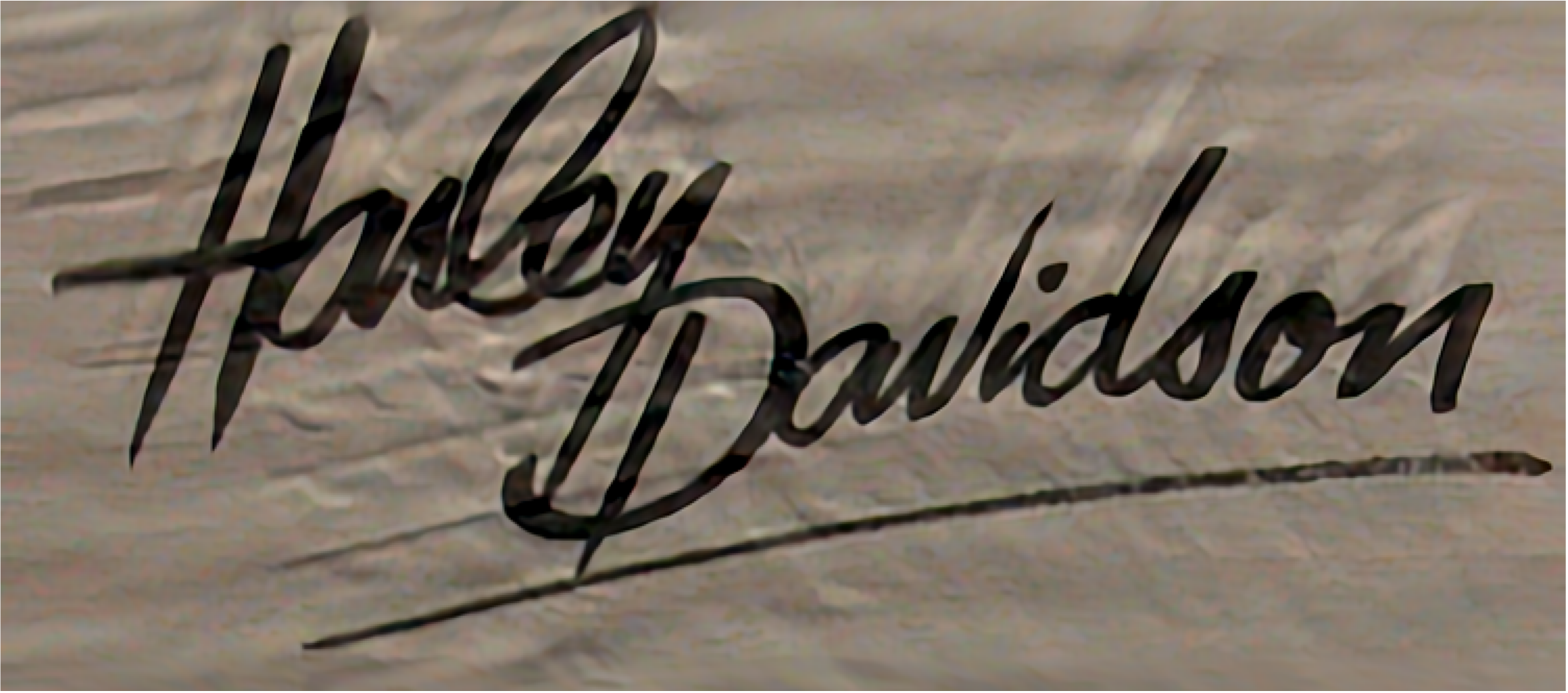}} \hspace{0.1mm} 
  \subfloat[Proposed]{\includegraphics[width=0.32\linewidth]{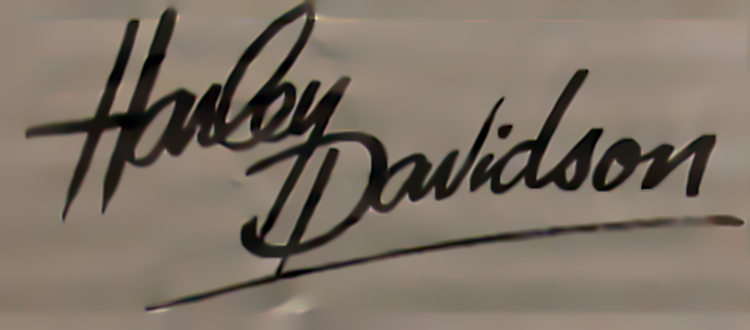}} 
\caption{Image deblurring with an asymmetric blur kernel in \cite{zhang2017learning}. Our result is visibly better than the learning-based method in \cite{zhang2017learning}.}
\label{deblurfig}
\end{figure*}
\begin{table}
\caption{Comparison of PSNR for deblurring.}
\centering
\resizebox{0.48\textwidth}{!}{
 \begin{tabular}{ | c| c| c| c|} 
 \hline
Method & Exp-I & Exp-II & Exp-III \\
 \hline
 EPLL \cite{zoran2011learning} & $24.04$ & $26.64$ & $21.36$ \\
 MLP \cite{schuler2013machine} & $24.76$ & $27.23$ & $22.20$ \\
 FlexISP \cite{heide2014flexisp} & $24.32$ & $26.84$ & $21.99$ \\
 LUT \cite{krishnan2009fast}] & $24.17$ & $26.60$ & $21.73$ \\ \cite{Meinhardt2017_learning_prox_op} & $24.51$ & $27.08$ & $21.83$ \\
 DEB-BM3D \cite{dabov2007image} & $24.19$ & $26.30$ & $21.48$ \\
 NCSR \cite{dong2012nonlocally} & $26.62$ & $30.03$ & $24.51$ \\
 IDD-BM3D \cite{danielyan2011bm3d} & $24.68$ & $27.13$ & $21.99$ \\
 IRCNN \cite{zhang2017learning} & $\textbf{27.93}$ & $\textbf{30.43}$ & $\textbf{25.67}$ \\
 Proposed  & $\textbf{27.70}$ & $\textbf{30.40}$ & $\textbf{25.62}$ \\
  \hline
 \end{tabular}
 }
 \label{Deblurtab}
 \end{table}

\begin{table}
\caption{PSNR and SSIM  on the  BSDS300 dataset for different deblurring  experiments.}
\centering
\resizebox{0.45\textwidth}{!}{%
\begin{tabular}{ |c|| c | c | c| }
\hline
Methods &  NCSR \cite{dong2012nonlocally} & IRCNN \cite{zhang2017learning}  & Proposed \\
\hline
\multirow{ 2}{*}{{Exp 1}} & $25.56$ & $\textbf{26.75}$ & $26.25$\\
& $0.6707$ & $0.7300$ & $\textbf{0.7351}$\\
\hline
\multirow{ 2}{*}{{Exp 2}} & $28.97$ & $\textbf{29.25}$ & $28.57$\\
& $0.8338$ & $0.8340$ & $\textbf{0.8371}$\\
\hline
\multirow{ 2}{*}{{Exp 3}} & $24.01$ & $\textbf{24.83}$ & $24.33$\\
& $0.5527$ & $0.6441$ & $\textbf{0.6473}$\\
\hline
\end{tabular}}
\label{Deblurtab300}
\end{table}

\section{Conclusion}
\label{conc}

This work builds on the observation that PnP regularization using kernel denoisers amounts to solving the classical regularization problem of minimizing  $f + \Phi$, where $f$ is the loss term and $\Phi$ is a convex regularizer. We showed that for linear inverse problems with quadratic $f$, the first-order optimality condition for this problem can be reduced to a linear system, which is solvable and admits a unique solution for deblurring, superresolution, and inpainting. Instead of performing PnP iterations, we proposed to directly solve this linear system. Indeed, using efficient Krylov solvers, we could solve this linear system at a superlinear rate which is a big jump from the sublinear convergence guarantee of first-order PnP algorithms. We validated the speedup in practice using deblurring, superresolution, and inpainting experiments. 
In terms of reconstruction quality, we were able to get close to deep learning methods.
A possible future work would be to apply our algorithm for hyperspectral imaging \cite{dian2021recent}, where kernel filters can play a vital role for efficient high-dimensional denoising \cite{nair2018fast,nair2019fast}.


\section{Appendix}
\label{Appendix} 
In this section, we state and prove the technical results in Section \ref{kr}. We first recall a few results from linear algebra and convex optimization. 

\subsection{Preliminaries}
 We will use $\N(\Z)$ and $\R(\Z)$ to denote  the null space and range space of a matrix $\Z$.

\begin{proposition}
\label{propprelim}
Let  $\Re^n = U_1 \oplus V_1 = U_2 \oplus V_2$, where $\oplus$ denotes orthogonal direct sum. Then $U_1 \subseteq U_2$ implies $V_2 \subseteq V_1$.
\end{proposition}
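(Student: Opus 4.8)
The plan is to reduce the claim to the standard fact that orthogonal complementation reverses inclusions. The first observation I would make is that an orthogonal direct sum decomposition pins down each summand as the orthogonal complement of the other. Indeed, from $\Re^n = U_1 \oplus V_1$ with $U_1 \perp V_1$, every $x \in V_1$ is orthogonal to all of $U_1$, so $V_1 \subseteq U_1^\perp$; conversely the dimension count $\dim U_1 + \dim V_1 = n = \dim U_1 + \dim U_1^\perp$ forces equality, giving $V_1 = U_1^\perp$. The identical argument yields $V_2 = U_2^\perp$.

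With this identification in hand, the proposition becomes the assertion that $U_1 \subseteq U_2$ implies $U_2^\perp \subseteq U_1^\perp$, which I would establish directly. Take any $x \in V_2 = U_2^\perp$, so that $\inner{x}{u} = 0$ for every $u \in U_2$. Since $U_1 \subseteq U_2$, this holds in particular for every $u \in U_1$, whence $x \in U_1^\perp = V_1$. As $x$ was arbitrary, $V_2 \subseteq V_1$, which is the desired conclusion.

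I do not anticipate a genuine obstacle here; the only point that requires a moment of care is the first step, namely justifying that the summand $V_i$ in an orthogonal direct sum equals $U_i^\perp$ exactly, rather than being merely contained in it. Once that is in place the rest is the textbook inclusion-reversal property of orthogonal complements, and no appeal to the structure of the kernel denoiser $\W$ or to any earlier result in the excerpt is needed. In the write-up I would state the complement identification as a single preliminary sentence and then give the one-line containment argument.
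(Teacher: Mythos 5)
Your proof is correct and takes essentially the same route as the paper's: fix $\x \in V_2$, observe that it is orthogonal to all of $U_2$ and hence to all of $U_1$, and conclude $\x \in U_1^\perp = V_1$. The only difference is that you explicitly justify the identification $V_i = U_i^\perp$ by a dimension count, a step the paper's proof glosses over when it asserts that the orthogonal complement of $U_1$ ``is $V_1$ by assumption.''
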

\begin{proof}
Fix $\x \in V_2$. By our assumption, $\x \perp \y$ for all $\y \in U_2$. Since $U_1 \subseteq U_2$, this mean that $\x$ belongs to the orthogonal complement of $U_1$, which is $V_1$ by assumption.
\end{proof}

\begin{proposition}
\label{prop2}
Let $ \A, \B  \in \Re^{n \times n}$ be symmetric positive semidefinite matrices. Then $\N( \A+\B) \subseteq \N(\A) \cup \N(\B)$ and  $\R(\A) \cup \R(\B) \subseteq \R(\A+\B)$. 
\end{proposition}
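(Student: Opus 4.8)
The plan is to reduce both containments to a single elementary fact about symmetric positive semidefinite matrices: for such a matrix $\Z$, one has $\x^\top \Z \x = 0$ if and only if $\Z\x = \ZE$. I would establish this helper fact first by factoring $\Z = \Z^{1/2}\Z^{1/2}$ through its positive semidefinite square root (equivalently, via the spectral decomposition $\Z = \Q\BLambda\Q^\top$ with $\BLambda$ having nonnegative diagonal). Then $\x^\top\Z\x = \norm{\Z^{1/2}\x}^2$, which vanishes exactly when $\Z^{1/2}\x = \ZE$, and this forces $\Z\x = \Z^{1/2}(\Z^{1/2}\x) = \ZE$; the converse is trivial.

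For the null-space part, I would take $\x \in \N(\A+\B)$ and expand $0 = \x^\top(\A+\B)\x = \x^\top\A\x + \x^\top\B\x$. Since $\A$ and $\B$ are positive semidefinite, both summands are nonnegative and must therefore vanish individually; applying the helper fact separately to $\A$ and $\B$ yields $\A\x = \ZE$ and $\B\x = \ZE$, i.e.\ $\x \in \N(\A) \cap \N(\B)$. This is sharper than the stated conclusion, and the claimed containment follows since $\N(\A)\cap\N(\B) \subseteq \N(\A)\cup\N(\B)$. Because the reverse inclusion $\N(\A)\cap\N(\B) \subseteq \N(\A+\B)$ is immediate, I in fact obtain the equality $\N(\A+\B) = \N(\A)\cap\N(\B)$, which is what drives the range computation.

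For the range part I would use the fact that a symmetric matrix $\Z$ satisfies $\R(\Z) = \N(\Z)^\perp$. Taking orthogonal complements in the null-space identity and invoking the standard rule $(S\cap T)^\perp = S^\perp + T^\perp$ gives $\R(\A+\B) = \N(\A+\B)^\perp = (\N(\A)\cap\N(\B))^\perp = \R(\A) + \R(\B)$. Since $\R(\A)\cup\R(\B) \subseteq \R(\A)+\R(\B)$, the desired containment follows at once.

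I do not expect a genuine obstacle here. The only step that uses the hypothesis in an essential way---and hence the only one needing care---is the helper fact $\x^\top\Z\x = 0 \Rightarrow \Z\x = \ZE$, which is exactly what fails for indefinite symmetric matrices. The remainder is routine bookkeeping with orthogonal complements, made clean by the symmetry of $\A$, $\B$ and $\A+\B$.
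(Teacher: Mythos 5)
Your proof is correct and takes essentially the same route as the paper: the null-space half (expand $\x^\top(\A+\B)\x$, use nonnegativity to force both quadratic forms to vanish, conclude $\A\x=\B\x=\ZE$) is identical, and the range half likewise dualizes through $\R(\Z)=\N(\Z)^\perp$ for symmetric $\Z$. The only differences are refinements on your side: you prove explicitly, via the PSD square root, the implication $\x^\top\A\x=0 \Rightarrow \A\x=\ZE$ that the paper asserts without justification, and by invoking $(S\cap T)^\perp = S^\perp + T^\perp$ you reach the stronger equalities $\N(\A+\B)=\N(\A)\cap\N(\B)$ and $\R(\A+\B)=\R(\A)+\R(\B)$, whereas the paper applies its Proposition~\ref{propprelim} to obtain only the needed containments.
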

\begin{proof}
Let $\x \in \N(\A + \B)$. Then $\x^\top(\A + \B)\x = 0$. However, by assumption, $\x^\top \A\x \geqslant 0$ and $\x^\top \B\x \geqslant 0$ for all $\x$. Thus $\x^\top(\A + \B)\x = 0$ only if $\x^\top \A\x = 0$ and $\x^\top \B\x = 0$, which further implies that $\A\x=\B\x=\boldsymbol{0}$. Hence, we conclude that $\x \in \N(\A) \cap \N(\B)$ for all $\x \in \N(\A + \B)$. 

Since $\A, \B$ and $\A + \B$ are symmetric,  $\Re^n = \R(\A) \oplus \N(\A) = \R(\B) \oplus \N(\B) = \R(\A + \B) \oplus \N(\A + \B)$. We just proved that $\N(\A + \B) \subseteq \N(\A)$ and $\N(\A + \B) \subseteq \N(\B)$. Hence, by Proposition \ref{propprelim}, $\R(\A) \subseteq \R(\A + \B)$ and $\R(\B) \subseteq \R(\A + \B)$, so $\R(\A) \cup \R(\B) \subseteq \R(\A + \B)$. 
\end{proof}
The following is a first-order optimality condition for convex functions [$16$].  
\begin{theorem}
\label{firstoptimal}
Let $\psi: \Re^n \to \Re \cup \{\infty\}$ be convex. Moreover, assume that there exists a  differentiable function $q: \Re^n \to \Re$ such that $q$ restricted to $\mathrm{dom}(\psi) := \{\x \in \Re^n: \psi(\x) < \infty\}$ equals $\psi$. Then $\x^\ast \in \Re^n$ is a global minimizer of $\psi$ if and only if 
\begin{align*}
\forall \x \in \mathrm{dom}(\psi):  \quad \nabla q(\x^\ast)^\top \! (\x - \x^\ast) \geqslant 0. 
\end{align*}
\end{theorem}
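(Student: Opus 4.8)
The plan is to reduce this extended-valued optimality statement to a one-dimensional calculus fact by restricting $\psi$ to line segments inside its domain, where $\psi$ coincides with the differentiable function $q$. First I would observe that for the statement to be meaningful we need $\x^\ast \in \mathrm{dom}(\psi)$, and that $\mathrm{dom}(\psi)$ is convex since $\psi$ is convex. Fixing an arbitrary $\x \in \mathrm{dom}(\psi)$, the segment $\x_t := \x^\ast + t(\x - \x^\ast)$ with $t \in [0,1]$ lies entirely in $\mathrm{dom}(\psi)$, so $\psi(\x_t) = q(\x_t)$ all along it. I would then introduce the scalar function $h(t) := q(\x_t)$, which is differentiable because $q$ is, and whose derivative at the origin is $h'(0) = \nabla q(\x^\ast)^\top (\x - \x^\ast)$ by the chain rule.

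For the forward direction, assuming $\x^\ast$ minimizes $\psi$, I would use $h(t) = \psi(\x_t) \geqslant \psi(\x^\ast) = h(0)$ for every $t \in [0,1]$; dividing $h(t) - h(0) \geqslant 0$ by $t > 0$ and letting $t \downarrow 0$ forces $h'(0) \geqslant 0$, which is exactly the asserted variational inequality for this $\x$. Since $\x$ is arbitrary, the inequality holds throughout $\mathrm{dom}(\psi)$.

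For the converse, assuming the variational inequality, I would exploit the convexity of $\psi$ via $\psi(\x_t) \leqslant (1-t)\psi(\x^\ast) + t\,\psi(\x)$, which rearranges to $\tfrac{1}{t}\bigl(\psi(\x_t) - \psi(\x^\ast)\bigr) \leqslant \psi(\x) - \psi(\x^\ast)$. Replacing $\psi$ by $q$ on the segment and letting $t \downarrow 0$, the left-hand side tends to $h'(0) = \nabla q(\x^\ast)^\top (\x - \x^\ast)$, giving $\nabla q(\x^\ast)^\top (\x - \x^\ast) \leqslant \psi(\x) - \psi(\x^\ast)$. Combined with the hypothesis that the left side is nonnegative, this yields $\psi(\x) \geqslant \psi(\x^\ast)$ for every $\x \in \mathrm{dom}(\psi)$, and trivially for $\x \notin \mathrm{dom}(\psi)$ where $\psi(\x) = \infty$; hence $\x^\ast$ is a global minimizer.

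The only delicate point, and the thing to get right, is the interface between the extended-valued $\psi$ and its differentiable surrogate $q$: the argument is legitimate precisely because $\mathrm{dom}(\psi)$ is convex, which keeps the whole segment $\x_t$ inside the domain and lets me silently replace $\psi$ by $q$ (and its gradient) there. Everything else is one-sided differentiation of a scalar convex function, so no subgradient machinery is required.
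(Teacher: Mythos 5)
Your proof is correct. Note, however, that the paper itself does not prove Theorem~\ref{firstoptimal}: it is stated in the Appendix preliminaries as a known first-order optimality condition and attributed to the convex-analysis literature (reference [16]), so there is no in-paper argument to compare against. Your argument is exactly the standard textbook proof of that cited result: restrict to the segment $\x^\ast + t(\x - \x^\ast)$, which stays in the convex set $\mathrm{dom}(\psi)$ where $\psi$ agrees with the differentiable surrogate $q$, and work with one-sided difference quotients --- minimality gives $h'(0) \geqslant 0$ in the forward direction, while convexity gives the gradient inequality $\nabla q(\x^\ast)^\top (\x - \x^\ast) \leqslant \psi(\x) - \psi(\x^\ast)$ in the converse, and points outside the domain are handled trivially since $\psi = \infty$ there. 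You were also right to flag that the statement implicitly requires $\x^\ast \in \mathrm{dom}(\psi)$: without that assumption the ``if'' direction is false (the variational inequality can hold at a point where $\psi$ is infinite, e.g.\ wherever $\nabla q$ vanishes), and this hypothesis is indeed satisfied in the paper's only use of the theorem, where the candidate minimizer is $\W\x \in \R(\W)$.
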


\subsection{Proof of Proposition \ref{prop1}}
Recall that $\Phi_{\W}(\x) = (1/2) \x^\top \D (\I - \W)\W^{\dagger} \x$ for all $\x \in\R(\W)$. Since $\R(\W)$ is nonempty, $\Phi_{\W}$ is proper. 

From the definitions of $\W$ and $\W^{\dagger}$, we can easily check that $\D(\I- \W)\W^{\dagger} = \P \boldsymbol{\Sigma} \P^\top\!$, where $\P:=\D^{1/2} \Q$ and $\boldsymbol{\Sigma}$ is diagonal where $\boldsymbol{\Sigma}_{ii} := (1 / \BLambda_{ii}) - 1$ if $\BLambda_{ii} > 0$, and $=0$ otherwise. Clearly, $\D(\I- \W)\W^{\dagger}$ is symmetric. Moreover, it can be shown that the eigenvalues of $\W$ are always in $[0, 1]$; e.g., see [$22$, Proposition $3.12$]. This means that $\boldsymbol{\Sigma}_{ii} \geqslant 0$ for all $i$, and hence $\D(\I- \W)\W^{\dagger}$ is positive semidefinite, which further implies that $\Phi_{\W}$ is nonnegative and convex. 

Since $\Phi_\W$ is  continuous on its domain $\R(\W)$ and $\R(\W)$ is a closed set in $\Re^n$, $\Phi_\W$ is closed [$16$, Lemma $1.24$].  

\subsection{Proof of Theorem \ref{proximalmap}}

Following definition \eqref{prox}, we need to show that for any $\x \in \Re^n$, $\W\x $ is a minimizer of the function
\begin{equation*}
\z \mapsto \psi(\z) =  \frac{1}{2} \lVert \z - \x \rVert_{\D}^2 +  \Phi_{\W} (\z).
\end{equation*}
We can write this as
\begin{equation*}
\psi(\z) =  
 \begin{cases}
   q(\z), & \text{if } \z \in \R(\W), \\
    \infty, & \text{otherwise},   
\end{cases}
\end{equation*}
where
\begin{equation*}
q(\z)= \frac{1}{2} {(\z - \x)}^\top \! \D (\z - \x) + \frac{1}{2} \z^\top\! \D (\I - \W)\W^{\dagger} \z.
\end{equation*}
It is clear that $\psi$ is convex. On the other hand, $q$ is differentiable and 
\begin{equation*}
\nabla q(\z) = \D (\z - \x) + \D (\I - \W)\W^{\dagger} \z.
\end{equation*}
Since the domain of $\psi$ is $ \R(\W)$, to use Theorem \ref{firstoptimal}, it suffices to show that $ \nabla q(\W\x)^\top\! (\z - \W\x) \geqslant 0$ for all 
$ \z \in \R(\W)$, that is,
\begin{equation}
\label{e1}
\forall \u \in \Re^n: \quad \nabla q(\W\x)^\top\! (\W\u - \W\x) \geqslant 0.
\end{equation}
However, after some calcuations, we obtain
\begin{align}
\label{eq}
& \ \nabla q(\W\x)^\top\! (\W\u - \W\x) \nonumber \\ 
&= (\u-\x)^\top \W^\top\! \D \Big[(\W-\I) +  (\I - \W)\W^{\dagger} \W \Big] \x. 
\end{align}
Now, since $\W= \W \W^\dagger \W$ and  $\W^\top \D =  \D \W$,we have
\begin{equation*}
 \W^\top\! \D \Big[(\W-\I) +  (\I - \W)\W^{\dagger} \W \Big]  = \mathbf{0}.
\end{equation*}
Thus, the expression in \eqref{eq} is identically zero. This establishes \eqref{e1} and completes the proof. 


\subsection{Proof of Proposition \ref{minimizers}}
Note that $\x^\ast \in \R(\W)$ is a minimizer of the objective in \eqref{pnpopt} if and only if  $\x^\ast = \W \z^\ast$, where $\z^\ast$ is a minimizer of 
\begin{align*}
\theta(\z) := \frac{1}{2} \| \y - \F\W\z \|^2_2 + \frac{\rho}{2} \z^\top \W^\top \D (\I - \W) \z.
\end{align*}
Since $\theta$ is convex and differentiable in $\z$, it has a minimizer if and only if there exists $\z^\ast \in \Re^n$ such that $\nabla  \theta(\z^\ast)=\boldsymbol{0}$. However, we can write $\nabla  \theta(\z)=\A\z-\b$, where $\b = \W^\top\F^\top\y$ and $\A$ is of the form $\A = \Z+\Y$ where
\begin{equation*}
\Z=\W^\top\F^\top\F\W \ \ \text{ and } \ \  \Y=\rho\W^\top \D (\I - \W).
\end{equation*}
Thus, $\theta$ has a minimizer if and only if the equation $\A\z=\b$ is solvable, i.e.,  $\b \in \R(\A)$. We next show that this is indeed the case. 

By definition, $\b$ is in the range of $\W^\top\F^\top$. However, note that $ \R(\W^\top\F^\top) = \R(\Z)$, so it suffices to show that $\R(\Z) \subseteq \R(\A)$. But this follows from Proposition \ref{prop2} since $\Z$ and $\Y$ are symmetric positive semidefinite and $\A=\Y+\Z$. This completes the proof.

\subsection{Proof of Proposition \ref{uniqueness}}
Let $\Y$ and $\Z$ be as in the proof of Proposition \ref{minimizers} and $\A=\Y+\Z$. Since $\Z$ and $\Y$ are symmetric positive semidefinite, by Proposition \ref{prop2}, we can conclude that $\A$ is invertible if $\N(\Z) \cap \N(\Y) = \{\boldsymbol{0}\}$. 
Now, it can be shown that $\W$ is irreducible, nonnegative and row stochastic [$36$, Chapter $8$]. Hence, by the Perron-Frobenius theorem, $\e$ is the Perron vector and $\W\e=\e$. Also $\W$ is nonsingular by hypothesis. Hence $\N(\I - \W)$ consists of scalar multiples of the Perron vector $\e$. Since $\W$ is invertible by hypothesis and $\D$ is invertible by construction, we have $\N(\Y)=\N(\rho \W^\top \D(\I - \W))= \{t \e: t \in \Re\}$. Since $\W\e = \e$ and $\F\e \neq \boldsymbol{0}$ by hypothesis, $\F\W\e = \F\e \neq \boldsymbol{0}$ and $\N(\F\W) = \N(\W^\top \F^\top \F \W)=\N(\Z)$.  Hence $\N(\Z) \cap \N(\Y) = \{\boldsymbol{0}\}$. 

\bibliographystyle{IEEEtran}
\bibliography{refs}

\end{document}